\documentclass[onefignum,onetabnum]{siamonline250211}

\usepackage[utf8]{inputenc}
\usepackage[T1]{fontenc}
\usepackage{amsfonts}
\usepackage{graphicx}
\usepackage{mathtools}
\usepackage{enumitem}
\usepackage{booktabs}
\usepackage{etoolbox}

\newsiamremark{assumption}{Assumption}
\newsiamremark{remark}{Remark}
\newsiamremark{example}{Example}

\AtEndEnvironment{assumption}{\hfill\(\diamond\)}
\AtEndEnvironment{remark}{\hfill\(\diamond\)}
\AtEndEnvironment{example}{\hfill\(\diamond\)}

\headers{Stochastic stability of master--slave synchronization}
{J. Miguez and I. P. Mari\~no}

\title{Stochastic stability of master--slave synchronization for dissipative
PDEs with Burgers-type nonlinearity\thanks{Submitted to the editors \today.
\funding{This work has been partially supported by grant PID2024-158181NB-I00
NISA, funded by MCIN/AEI/10.13039/501100011033 and ERDF, and by the Community
of Madrid under grant IDEA-CM (ref. TEC-2024/COM-89).}}}

\author{Joaqu\'in Miguez\thanks{Universidad Carlos III de Madrid,
Avenida de la Universidad 30, 28911 Legan\'es (Madrid), Spain
(\email{joaquin.miguez@uc3m.es}).}
\and In\'es P. Mari\~no\thanks{Universidad Rey Juan Carlos,
Calle Tulip\'an s/n, 28933 M\'ostoles (Madrid), Spain
(\email{ines.perez@urjc.es}).}}

\ifpdf
\hypersetup{
  pdftitle={Stochastic stability of master-slave synchronization for dissipative PDEs with Burgers-type nonlinearity},
  pdfauthor={Joaquin Miguez and Ines P. Marino}
}
\fi

\begin{document}

\maketitle

\begin{abstract}
We investigate the stochastic stability of master--slave synchronization for a class of nonlinear dissipative evolution equations sharing a Burgers-type convective nonlinearity and a polynomial linear differential operator. The family includes the Burgers, Kuramoto--Sivashinsky, Kawahara, Benney--Lin, and Nikolaevskiy equations, among others. Under periodic boundary conditions, each equation is represented through a finite-dimensional Fourier truncation, yielding a complex state vector coupled to a slave system driven by observed master data. We first establish local exponential stability of the deterministic zero-error synchronization manifold under a simple condition on the coupling strength in the slave model. We then introduce observational noise in the coupling signals, which transforms the slave system into an It\^o diffusion and prevents exact synchronization. Our analysis then focuses on the deviation $\Delta(t)$ between the stochastic synchronization error and the deterministic reference error that is exponentially stable around zero. For arbitrary members of the PDE family, we prove an $\mathcal{O}(\sigma^2)$ finite-time mean-square bound on $\| \Delta(t) \|^2$ localized to a neighbourhood of the synchronization manifold, together with a corresponding tail-probability estimate. Under a global one-sided dissipativity assumption (satisfied, e.g., by Burgers' equation and the Kuramoto--Sivashinsky equation with hyperviscosity) the localization is removed and an $\mathcal{O}(\sigma^2)$ time-uniform bound is obtained. These bounds are derived in Fourier space and are transferred to the physical domain through Parseval's relation. Under uniform Sobolev and Fourier--Galerkin stability assumptions, we also bound the error with respect to the non-truncated (infinite-dimensional) master system uniformly in the truncation order. These analytical results are illustrated by the numerical simulation of two master-slave systems, built around the Kawahara and Nikolaevskiy equations, respectively.
\end{abstract}

\begin{keywords}
stochastic synchronization, master--slave systems, dissipative partial
differential equations, observational noise, Fourier--Galerkin methods,
continuous data assimilation
\end{keywords}

\begin{MSCcodes}
60H10, 35B35, 65M70, 93C20, 93D15
\end{MSCcodes}

%
\section{Introduction}

%
\subsection{Master--slave synchronization of nonlinear PDEs}
\label{sIntroSync}

Since the seminal work of Pecora and Carroll on the synchronization of chaotic systems~\cite{Pecora90}, master--slave synchronization has become an extensively studied topic, owing both to its conceptual simplicity and to its broad range of applications in science and engineering~\cite{BrownKocarev2000,Boccaletti14}. In this setting, a drive system (the \emph{master}) generates signals that are injected into a response system (the \emph{slave}) through a unidirectional coupling mechanism designed so that the slave asymptotically reproduces the master dynamics ---a regime known as \emph{identical synchronization}. Related mechanisms, including bidirectionally coupled systems, mutually synchronized networks, and adaptive schemes for joint state and parameter estimation, have also been widely investigated~\cite{Parlitz96,Maybhate99,dAnjou01,Marinho05,Marinho06,Marinho06b,Boccaletti14}. The present work is concerned exclusively with unidirectionally coupled master--slave systems, the configuration that naturally arises when a reference trajectory is observed in a physical system and then used to drive a computational model.

Among the broad class of systems exhibiting synchronization, nonlinear dissipative and dispersive evolution equations are relevant because they arise in physical problems involving transport, instability, dissipation, dispersion and pattern formation. Despite their diverse origins, many such equations share a common mathematical structure: a Burgers-type convective nonlinearity $uu_x=\partial_x(u^2/2)$ combined with a constant-coefficient linear differential operator. In compact form, the family reads
\begin{equation}
\label{eq:intro-pde}
u_t + c_1\, u\, u_x + P(\partial_x)\,u = 0,
\end{equation}
where $u(t,x)$ is the field of interest, $u_t$ and $u_x$ denote its partial derivatives with respect (w.r.t.) time $t$ and space $x$, respectively, $c_1\in\mathbb{R}$ is a constant coefficient, and $P$ is a polynomial in the partial derivative operator $\partial_x$, whose coefficients encode dissipative, antidissipative, and dispersive effects. This class includes, among others, the viscous Burgers equation~\cite{Burgers1948}, the generalized Kuramoto--Sivashinsky equation~\cite{Kuramoto1976,Kuramoto1978,Sivashinsky1977,Sivashinsky1980}, the Kawahara equation~\cite{Kawahara1972,Kawahara1978,KakutaniOno1969}, the Benney--Lin equation~\cite{Benney1966,Lin1974,OronDavisBankoff1997,CrasterMatar2009,BiagioniLinares1997}, and the Nikolaevskiy equation~\cite{BeresnevNikolaevskii1993,XiToralGuntonTribelsky2000,CoxMatthews2007,SimbawaMatthewsCox2010}. The common structure makes it possible to formulate the synchronization problem at the level of an operator family rather than on a case-by-case basis.

Under periodic boundary conditions on the spatial interval $[0,X]$, with period $X<\infty$, the field $u(t,x)$ can be expanded in a Fourier series and truncated to $K$ retained modes, yielding a finite-dimensional, complex state vector $\bar a(t)\in\mathbb{C}^{K+1}$ for the master system. Observations of the master field are used to reconstruct a retained coefficient vector $\hat a(t)$, and a slave with state $b(t)\in\mathbb{C}^{K+1}$ is driven by these observations through the coupling correction $D(\hat a(t)-b(t))$, where $D \in \mathbb{C}^{(K+1)\times(K+1)}$ is the coupling gain matrix. Previous work by the authors studied deterministic master--slave synchronization and parameter estimation for the Kuramoto--Sivashinsky equation~\cite{Miguez2024}. The present paper extends the synchronization analysis in \cite{Miguez2024} to the full PDE family~\eqref{eq:intro-pde} and addresses the stochastic setting that arises when the observations of the master system are noisy.

%
\subsection{Stochastic stability under noisy coupling}
\label{sIntroStochastic}

In the deterministic setting, the relevant signals in the master system are assumed to be observed exactly (noiselessly). In practice, however, any observed quantities in the master system are typically obtained through physical sensors, which are subject to uncertainty. As a consequence, the slave is driven not by the exact master state but by noisy observations of it. When white observational noise of intensity $\sigma>0$ enters the coupling, the Fourier-space synchronization error $e(t):=\bar a(t)-b(t)$ becomes a stochastic process and exact convergence $e(t)\to 0$ is no longer possible. The appropriate question becomes one of \emph{stochastic stability}: does the synchronization mechanism remain effective despite the stochastic perturbations? Several authors have studied numerically the robustness of master-slave coupling schemes (designed for specific models and a deterministic set-up) when the observed signals are noisy \cite{Marinho05,Duane06,Marinho06b,Miguez2024}, however a more general theoretical analysis has been lacking.

This perspective differs from the existing literature on synchronization in stochastic systems, where randomness is intrinsic to the dynamics themselves. That body of work has investigated synchronization induced by common noise, synchronization of stochastic differential equations, random attractors, and asymptotic pathwise convergence under stochastic forcing~\cite{BerglundFernandezGentz2007,CaraballoChueshovKloeden2007,ChueshovSchmalfuss2010,ChueshovSchmalfuss2020,GessTsatsoulis2022,BlessingBloemker2026}. In contrast, the present paper considers a deterministic master whose state is observed through noisy measurements; the stochasticity arises from imperfect information, not from the underlying physical evolution. This setting is directly motivated by experimental observations and engineering applications, where measurement uncertainty is unavoidable.

%
\subsection{Continuous data assimilation}
\label{sIntroCDA}

The preceding construction is closely related to data assimilation, where observations of a physical system are combined with a mathematical model to estimate its evolving state~\cite{Evensen2009,LawStuartZygalakis2015,ReichCotter2015,moradkhani2019fundamentals}. In synchronization terminology, the physical system is the master, the computational model is the slave, and the observations provide the coupling signals that drive the model toward the reference trajectory~\cite{Duane06,AbarbanelEtAl2017,Penny2017}. In the PDE literature, this mechanism underlies continuous data assimilation (CDA) by nudging. The algorithm introduced by Azouani, Olson, and Titi~\cite{AzouaniOlsonTiti2014} continuously corrects an auxiliary PDE using a finite-resolution spatial interpolant of a reference solution. The resulting feedback term is the counterpart of the master--slave coupling used here.

Noisy CDA is especially close to the problem considered in this paper. Bessaih, Olson, and Titi~\cite{BessaihOlsonTiti2015} have studied an infinite-dimensional nudging equation for the two-dimensional incompressible Navier--Stokes equations, driven by noisy volume-element or nodal observations. They obtain asymptotic mean-square error bounds controlled by the observation-noise variance, using the coercivity of the Stokes operator and estimates for the Navier--Stokes nonlinearity. More recently, Br\"ocker et al.~\cite{BroeckerDelSartoHieberPalmaZoechling2026} have developed an abstract infinite-dimensional framework for semilinear parabolic equations with multiplicative observation noise. Their results include mean-square convergence to a noise-dependent residual and, under an additional integrability condition on the noise, almost-sure synchronization. Their parabolic framework may accommodate some dissipative members of~\eqref{eq:intro-pde}, subject to verification of its abstract assumptions, but it does not cover genuinely non-parabolic members such as the purely dispersive Kawahara equation. Related extensions of noisy CDA include time-averaged observations~\cite{BlocherMartinezOlson2018}.

%
\subsection{Contributions}
\label{sIntroContributions}

We first develop a unified master--slave synchronization framework for evolution equations with Burgers-type quadratic transport and an arbitrary polynomial linear differential operator. The deterministic and stochastic error dynamics are formulated spectrally for Fourier--Galerkin approximations of arbitrary order $K$, which makes the dependence on the linear Fourier symbol and the observation mechanism explicit while allowing the truncation error to be analyzed at a later stage. Within a single formulation, the framework applies to the Burgers, Kuramoto--Sivashinsky, Kawahara, Benney--Lin, and Nikolaevskiy equations, among others. In contrast with infinite-dimensional CDA results based on parabolic coercivity, the finite-$K$ stability analysis is not restricted to semilinear parabolic PDEs and includes higher-order dispersive--dissipative and purely dispersive equations. We prove local exponential stability of the zero-error solution $e(t)\equiv 0$ under a sufficient scalar coupling condition $d>d_0$.

Building on this foundation, we analyze noisy master--slave synchronization as a stochastic perturbation of the deterministic finite-$K$ dynamics. The master trajectory remains deterministic, while observational noise of intensity $\sigma$ enters through the coupling signals injected into the slave. Unlike physical-space interpolants used in much of the CDA literature, the observations are mapped explicitly to retained Fourier coefficients through the least-squares reconstruction matrix $R_K$. For each $K$, the resulting finite-dimensional It\^o SDE has a drift and diffusion coefficient that retain explicit information about the linear Fourier symbol, the coupling gain, the observation grid, and the reconstruction map. Persistent noise prevents exact convergence, so we compare the stochastic error $e(t)$ with the deterministic reference error $e_{\mathrm{ref}}(t)$ obtained from noiseless observations and the same initial slave state. The deviation $\Delta(t):=e(t)-e_{\mathrm{ref}}(t)$, initialized at $\Delta(0)=0$, isolates the effect of observation noise.

Let us fix an arbitrary finite time horizon $T>0$. The first key result applies under minimal assumptions on the retained-mode (finite $K$) dynamics, namely, bounded master and reference paths on $[0,T]$ and a sufficiently strong scalar coupling, which yields local one-sided dissipativity around the deterministic reference path. No global dissipativity, attractor structure, infinite-dimensional parabolic coercivity, or convergence assumption as $K\to\infty$ is required at this stage. Up to the first exit time $0 < \tau_\rho \le T$ from the neighbourhood where this local control holds, we prove that
\begin{equation}
\label{eq:intro-local-bound}
\mathbb{E}\!\left[\sup_{0\le s\le\tau_\rho}\|\Delta(s)\|^2\right]
\le C_T\,\sigma^2 d^2\,\|R_K\|_F^2,
\end{equation}
where the finite constant $C_T$ depends only on the time horizon $T$ and $R_K$ is the least-squares reconstruction matrix that maps the observations into Fourier coefficients. In addition, it can be shown that $\mathbb P(\tau_\rho<T)=\mathcal O(\sigma^2)$ as $\sigma\to0$.

For a subclass satisfying a global one-sided dissipativity condition, localization by $\tau_\rho$ can be removed. In this case, we prove the time-uniform bound
\begin{equation}
\label{eq:intro-global-bound}
\sup_{t\ge 0}\,\mathbb{E}\|\Delta(t)\|_{K;L^2}^2
\le \frac{\sigma^2 d^2\,\|R_K\|_{K;L^2,F}^2}{2\mu},
\end{equation}
where $\mu>0$ is the dissipativity gap, $\|\cdot\|_{K;L^2}$ is the Fourier-space norm induced by the physical $L^2(0,X)$ inner product and $\|\cdot\|_{K;L^2,F}$ denotes the Frobenius norm induced by the inner product associated with $\|\cdot\|_{K;L^2}$. Both estimates are derived spectrally and transfer to the corresponding truncated physical fields through Parseval's identity.

Finally, we quantify the synchronization error between the full, infinite-dimensional master field $u$ and the order-$K$ slave field $v_K$. We decompose this error into the unresolved Fourier tail, the resolved Fourier--Galerkin error, the deterministic reference error, and the stochastic deviation. Under Sobolev regularity, Fourier--Galerkin stability and dissipativity assumptions that hold uniformly in $K$, we obtain a physical-space mean-square bound of the form
\[
\mathbb E\|\mathcal E_\infty^{(K)}(t,\cdot)\|_{L^2(0,X)}^2
\le C_{1,T}\kappa_K^{-2s}
+C_{2,T}\kappa_K^{-2(s-1)}
+C_3{\rm e}^{-2\mu_0t}
+C_4\sigma^2\bigl(1-{\rm e}^{-2\mu_0t}\bigr),
\]
where $\mathcal E_\infty^{(K)}(t,x)$ is the stochastic synchronization error between the infinite dimensional field $u(t,x)$ and the truncated (order $K$) slave field $v_K(t,x)$. The constants $C_{1,T}, C_{2,T}, C_3$ and $C_4$ are finite and independent of $K$. Thus, the projection and Galerkin contributions vanish as $K\to\infty$. The remaining bound consists of the decaying deterministic-reference term and a stochastic residual of order $\mathcal O(\sigma^2)$. Under the additional time-uniform hypotheses stated in Section~\ref{sec:physical-error}, the same stochastic order is retained when first taking $K\to\infty$ at each fixed time and then letting $t\to\infty$.

%
\subsection{Organization of the paper}
\label{sIntroOrganization}

Section~\ref{sPDEFamily} introduces the PDE family and its Fourier representation. Section~\ref{sMS} formulates the deterministic master--slave model and establishes local synchronization. Section~\ref{sStochasticSynch} constructs the stochastic slave equation from noisy physical-space observations. Section~\ref{sec:stochastic-synchronization} presents the local and global stochastic stability estimates and extends the latter to the infinite-dimensional physical-space synchronization error. Section~\ref{sec:numerical-results} illustrates numerically the predicted noise scaling and the spectral truncation error. Finally, Section~\ref{sConclusions} summarizes the results and outlines directions for future research.

%

\section{A class of nonlinear PDEs and their Fourier representation}
\label{sPDEFamily}

%
\subsection{A family of PDEs with Burgers-like nonlinearity}

Let \(u\equiv u(t,x)\) be a real-valued scalar field, with \(t\in[0,T]\) and \(x\in\mathbb R\), that evolves according to equation \eqref{eq:intro-pde}. The linear operator \(P(\partial_x)\) in \eqref{eq:intro-pde} can be explicitly written as
\begin{equation}
\label{eq:operator-polynomial}
P(\partial_x)
=
c_0\partial_x+c_2\partial_x^2+c_3\partial_x^3
+c_4\partial_x^4+c_5\partial_x^5+c_6\partial_x^6 ,
\end{equation}
where \(c_2, \ldots, c_6 \in\mathbb R\) are constant coefficients, \(\partial_x=\partial/\partial x\) and $\partial_x^n$ denotes the $n$-th order partial spatial derivative. 
The common structure to this class of models is the conservative nonlinear self-transport term
\(uu_x=\partial_x(u^2/2)\), together with linear effects that may be dissipative, antidissipative, dispersive or regularizing depending on the coefficients in \(P\). This class contains lower-order transport-diffusion models as well as several higher-order pattern-forming equations.

\begin{example}[Some members of the class]
The viscous Burgers equation \cite{Burgers1948},
\begin{equation}
\label{eq:burgers}
u_t+u u_x=\nu u_{xx},
\end{equation}
is recovered from \eqref{eq:intro-pde} by taking \(c_2=-\nu\), and $c_0=c_3=c_4=c_5=0$, i.e., \(P_{\rm B}(\partial_x)=-\nu\partial_x^2\). The generalized Kuramoto--Sivashinsky equation \cite{Kuramoto1976,Kuramoto1978,Sivashinsky1977,Sivashinsky1980,Miguez2024},
\begin{equation}
\label{eq:ks}
u_t+u u_x+\alpha u_{xx}+\beta u_{xxx}+\gamma u_{xxxx}=0,
\end{equation}
is obtained with \(c_2=\alpha\), \(c_3=\beta\), \(c_4=\gamma\), and \(c_0=c_5=c_6=0\). A dispersive Nikolaevskiy-type equation \cite{BeresnevNikolaevskii1993,XiToralGuntonTribelsky2000,CoxMatthews2007,SimbawaMatthewsCox2010} can be written as
\begin{equation}
\label{eq:nikolaevskiy-compact}
u_t
=
-\partial_x^2\!\left(r-(1+\partial_x^2)^2\right)u
-u u_x+\alpha u_{xxx}+\beta u_{xxxxx},
\end{equation}
which expands to
\begin{equation}
\label{eq:nikolaevskiy-expanded}
u_t+u u_x-(1-r)u_{xx}-\alpha u_{xxx}
-2u_{xxxx}-\beta u_{xxxxx}-u_{xxxxxx}=0.
\end{equation}
Thus it corresponds to \(c_2=-(1-r)\), \(c_3=-\alpha\), \(c_4=-2\), \(c_5=-\beta\), \(c_6=-1\), and \(c_0=0\). Kawahara-type equations \cite{Kawahara1972,Kawahara1978,KakutaniOno1969,BiagioniLinares1997} and Benney--Lin equations for thin-film and long-wave instabilities \cite{Benney1966,Lin1974,OronDavisBankoff1997,CrasterMatar2009} are obtained in the same way by selecting the appropriate third-, fourth-, and fifth-order linear coefficients in \eqref{eq:operator-polynomial}.
\end{example}

%
\subsection{Fourier representation}

Assume periodic boundary conditions with spatial period \(X>0\) (namely $u(t,x+X)=u(t,x)$, where $x\in\mathbb R$), let $\omega_0=\frac{2\pi}{X}$ be the fundamental (spatial) frequency and define the Fourier basis functions 
\begin{equation}
\phi_k(x):={\rm e}^{i\omega_0 kx}, \quad k\in\mathbb Z.
\nonumber
\end{equation}
Then, we can expand
\begin{equation}
\label{eq:fourier-expansion}
u(t,x)=\sum_{k\in\mathbb Z}a_k(t)\phi_k(x),
\end{equation}
with Hermitian symmetry, $a_{-k}(t)=a_k(t)^*$, because \(u\) is real-valued. The linear operator is diagonal in this basis; to be specific,
\begin{equation}
\label{eq:symbol}
P(\partial_x)\phi_k(x)=P(i\omega_0 k)\phi_k(x),
\end{equation}
where
\begin{equation}
\label{eq:symbol-expanded}
P(i\omega_0 k)
=ic_0\omega_0 k-c_2\omega_0^2k^2-ic_3\omega_0^3k^3
+c_4\omega_0^4k^4+ic_5\omega_0^5k^5-c_6\omega_0^6k^6.
\end{equation}
For the nonlinear term, the $k$-th Fourier coefficient of \(uu_x\) is
\[
\widehat{(uu_x)}_k
=
\sum_{\ell+m=k} i\omega_0 m a_\ell a_m
=
i\omega_0\sum_{\ell\in\mathbb Z}(k-\ell)a_\ell a_{k-\ell}
\]
and symmetrizing the sum in the right-hand-side above yields
\begin{equation}
\label{eq:convolution}
\widehat{(uu_x)}_k
=
\frac{i\omega_0 k}{2}\sum_{\ell\in\mathbb Z}a_\ell a_{k-\ell}.
\end{equation}
Substituting \eqref{eq:symbol} and \eqref{eq:convolution} into \eqref{eq:intro-pde} yields the system of ordinary differential equations (ODEs)
\begin{equation}
\label{eq:modal-ode}
\dot a_k
=
-P(i\omega_0 k)a_k
-\frac{ic_1\omega_0 k}{2}\sum_{\ell\in\mathbb Z}a_\ell a_{k-\ell},
\qquad k\in\mathbb Z .
\end{equation}
This is the common Fourier representation of the family \eqref{eq:intro-pde}. We remark that only the scalar multiplier \(P(i\omega_0 k)\) changes from one model to another.

\begin{example}
For Burgers' equation \eqref{eq:burgers} one readily obtains
\begin{equation}
\label{eq:burgers-modal}
\dot a_k
=
-\nu\omega_0^2 k^2 a_k
-\frac{i\omega_0 k}{2}\sum_{\ell\in\mathbb Z}a_\ell a_{k-\ell},
\end{equation}
while the Nikolaevskiy equation \eqref{eq:nikolaevskiy-expanded} yields
\begin{equation}
\label{eq:nik-modal}
\dot a_k
=
\left(-(1-r)\omega_0^2k^2-i\alpha\omega_0^3k^3
+2\omega_0^4k^4+i\beta\omega_0^5k^5-\omega_0^6k^6\right)a_k
-\frac{i\omega_0 k}{2}\sum_{\ell\in\mathbb Z}a_\ell a_{k-\ell}.
\end{equation}
The generalized Kuramoto--Sivashinsky, Kawahara, and Benney--Lin Fourier systems are obtained in the same manner by substituting their corresponding polynomial symbols into \eqref{eq:modal-ode}.
\end{example}

For a practical computational representation, the Fourier series \eqref{eq:fourier-expansion} can be truncated. For \(K\ge1\), we define the set of retained Fourier modes
\[
\mathcal R_K:=\{-K,-K+1,\ldots,K-1,K\},
\]
which yields the truncated field
\begin{equation}
\label{eq:master-truncated}
u_K(t,x)=\sum_{k=-K}^{K} \bar a_k(t)\phi_k(x),
\qquad \bar a_{-k}(t)=\bar a_k^*(t).
\end{equation}
The coefficients $\bar a_k(t)$ satisfy the system of ODEs
\begin{equation}
\label{eq:master-truncated-ode}
\dot{\bar a}_k
=
-P(i\omega_0 k) \bar a_k
-\frac{ic_1\omega_0 k}{2}\sum_{\ell=-K}^{K} \bar a_\ell \bar a_{k-\ell},
\qquad k\in\mathcal R_K,
\end{equation}
with the convention \(\bar a_j=0\) whenever \(|j|>K\). Since \(u_K\) is real-valued, one may store only the nonnegative retained modes \(k=0,\ldots,K\), with the negative modes recovered by Hermitian symmetry. 

The accuracy of Fourier projection is governed by the decay of the Fourier coefficients, which is a standard topic in spectral approximation theory. Let ${\rm P}_K$ denote the $L^2(0,X)$-orthogonal projection onto the Fourier subspace spanned by $\{\phi_k:|k|\le K\}$. If, for a fixed time \(t\), the field satisfies \(u(t,\cdot)\in H^s_{\rm per}(0,X)\) (where $H^s_{\rm per}(0,X)$ is the Sobolev space of $X$-periodic functions $w(x)=\sum_{k\in\mathbb Z}\widehat w_k\phi_k(x)$ such that $X\sum_{k\in\mathbb Z}(1+\omega_0^2k^2)^s|\widehat w_k|^2<\infty$) then Parseval's identity implies an algebraic tail estimate of the form
\begin{equation}
\label{eq:fourier-tail-sobolev}
\|u(t,\cdot)-{\rm P}_Ku(t,\cdot)\|_{L^2(0,X)}
\le C_s K^{-s}\|u(t,\cdot)\|_{H^s_{\rm per}(0,X)},
\end{equation}
for some $C_s<\infty$. By contrast, if \(u(t,\cdot)\) is analytic in a complex strip around the real axis, or more generally belongs to a suitable Gevrey class, then the Fourier coefficients decay exponentially and one obtains
\begin{equation}
\label{eq:fourier-tail-analytic}
\|u(t,\cdot)-{\rm P}_Ku(t,\cdot)\|_{L^2(0,X)}
\le C e^{-\lambda K}
\end{equation}
for some constants \(C,\lambda>0\) depending on the analyticity radius and on the corresponding analytic norm. This algebraic-versus-exponential dichotomy is classical in Fourier spectral approximation; see, e.g., \cite{Trefethen00,Canuto06}. 
The projected field ${\rm P}_Ku$ and the solution $u_K$ of the closed Fourier--Galerkin system \eqref{eq:master-truncated-ode} are distinct in general: the latter replaces the full nonlinear convolution by the interaction among retained modes. Hence the total master approximation error contains both the unresolved tail $u-{\rm P}_Ku$ and the resolved-mode Galerkin error ${\rm P}_Ku-u_K$. Section~\ref{sec:physical-error} treats these contributions separately.

%
\section{Local synchronization in a master--slave system} \label{sMS}

This section introduces the deterministic master--slave system associated with the truncated Fourier representation above. We first specify the finite-dimensional master and slave dynamics, then derive the synchronization-error equation and state a local synchronization result. 

%
\subsection{Master--slave model}

%

The truncated, real master field $u_K(t,x)$ in \eqref{eq:master-truncated} can be represented by the Fourier coefficients $\bar a_k(t)$, $|k| \le K$, that evolve according to \eqref{eq:master-truncated-ode} and satisfy $\bar a_k(t) = \bar a_k^*(t)$. In vector form, we store the nonnegative modes as
\[
\bar a(t):=[\bar a_0(t),\ldots,\bar a_K(t)]^\top \in\mathbb C^{K+1}
\]
Let us also define the diagonal matrix
\begin{equation}
\label{eq:general-Lambda}
\Lambda(P,\omega_0):=\operatorname{diag}\bigl(-P(i\omega_0 k)\bigr)_{k=0}^{K},
\end{equation}
and the $(K+1)$-dimensional vector $\eta(\bar a)$
\begin{equation}
\label{eq:eta-general}
[\eta(\bar a)]_k:=k\sum_{\ell=-K}^{K} \bar a_\ell \bar a_{k-\ell},
\qquad k=0,\ldots,K,
\end{equation}
where \(\bar a_{-\ell}= \bar a_\ell^*\) and \(\bar a_j=0\) for \(|j|>K\). With this notation, the master system of \eqref{eq:master-truncated-ode} can be characterized by the ODE
\begin{equation}
\label{eq:master-vector-general}
\dot{\bar a}=\Lambda(P,\omega_0)\bar a-\frac{ic_1\omega_0}{2}\eta(\bar a).
\end{equation}

The slave model is designed with the same truncation and is driven by reconstructed master coefficients \(\hat a_k(t)\), \(k=0,\ldots,K\). These coefficients can be obtained from noiseless physical-space data by a finite Fourier reconstruction. Indeed, suppose that, at every time \(t\), the values of the truncated master field are available at sensor locations $\mathcal{X}_N=\{x_0,\ldots,x_{N-1}\} \subset [0,X)$, namely, one observes \(y_n(t)=u_K(t,x_n)\). Let \(\Psi_K\in\mathbb C^{N\times(2K+1)}\) be the matrix with entries \( [\Psi_K]_{n,k}=\phi_k(x_n)\), \(k\in\mathcal R_K\). If \(N\ge 2K+1\) and \(\Psi_K\) has full column rank, then the retained coefficients are recovered by the least-squares formula
\[
\hat a_{-K:K}(t)=(\Psi_K^H\Psi_K)^{-1}\Psi_K^H y(t),
\]
where $y(t)=( y_0(t), \ldots, y_{N-1}(t) )^\top$ and \(\hat a_{-K:K}(t)=(\hat a_{-K}(t),\ldots,\hat a_K(t))^\top\). For equispaced grids satisfying the usual Nyquist resolution condition, this is simply the discrete Fourier transform restricted to the retained modes. Since \(u_K\) is real-valued, \(\hat a_{-k}=\hat a_k^*\), and it is sufficient to study the one-sided coefficients \(\hat a_0,\ldots,\hat a_K\). For the truncated master field and noiseless reconstruction setting described above, \(\hat a_k(t)=\bar a_k(t)\) for every retained mode. If the observations are collected from the non-truncated field $u(t,x)$, then $\hat a_k(t) \approx \bar a_k(t)$ only.  

Similar to \cite{Miguez2024}, let us assume a truncated slave field of the form
\begin{equation}
\label{eq:slave-field-general}
v_K(t,x)=\sum_{k=-K}^{K}b_k(t)\phi_k(x),
\end{equation}
with \(b_{-k}=b_k^*\), and its retained coefficients satisfying
\begin{equation}
\label{eq:slave-general}
\dot b_k
=
-P(i\omega_0 k)b_k
-\frac{ic_1\omega_0 k}{2}\sum_{\ell=-K}^{K}b_\ell b_{k-\ell}
+\sum_{j=0}^K D_{k,j}(\hat a_j-b_j),
\qquad k=0,\ldots,K,
\end{equation}
where $D_{k,j}$ is the $k$-th row, $j$-th column entry of the $(K+1) \times (K+1)$ coupling matrix $D$. In vector form,
\begin{equation}
\label{eq:slave-vector-general}
\dot b=\Lambda(P,\omega_0)b-\frac{ic_1\omega_0}{2}\eta(b)+D(\hat a-b),
\end{equation}
where $b(t)=\left( b_0(t),\ldots,b_K(t) \right)^\top$ and $\hat a(t) = \left( \hat a_0(t), \ldots, \hat a_K(t) \right)^\top$.

%
\subsection{Local synchronization}

The synchronization-error field is
\begin{equation}
\label{eq:error-field-general}
\mathcal E(t,x):=u_K(t,x)-v_K(t,x)=\sum_{k=-K}^{K}e_k(t)\phi_k(x),
\end{equation}
where
\begin{equation}
\label{eq:error-fourier-general}
e_k(t)=\bar a_k(t)-b_k(t),\qquad k\in\mathcal R_K,
\end{equation}
and the corresponding stored error vector is \(e(t)=\bar a(t)-b(t)\in\mathbb C^{K+1}\). We aim to analyze the stability of the synchronization error signal $e(t)$ around $e(t)=0$. For this purpose, we impose some (fairly standard \cite{Miguez2024}) regularity assumptions. 

\begin{assumption}[Exact reconstruction]
\label{ass:exact-reconstruction}
The observed master field is exactly the truncated field \(u_K\), and the reconstruction is exact on the retained nonnegative modes
\[
\hat a_k(t)=\bar a_k(t),\qquad k=0,\ldots,K,
\quad t\in[0,T].
\]
\end{assumption}

\begin{assumption}[Bounded master]
\label{ass:bounded-master}
The truncated master field is uniformly bounded on the time interval of interest, i.e.,
\[
|u_K|_{\infty,T}:=\sup_{x\in\mathbb R,\,0\le t\le T}|u_K(t,x)|<\infty .
\]
\end{assumption}

\begin{assumption}[Scalar coupling]
\label{ass:scalar-coupling}
The synchronization gain is \(D=dI\), where \(I\) is the $(K+1)\times (K+1)$ identity matrix on and \(d>0\) is a real constant.
\end{assumption}

Under Assumption~\ref{ass:exact-reconstruction}, subtracting \eqref{eq:slave-vector-general} from \eqref{eq:master-vector-general} yields
\begin{equation}
\label{eq:error-ode-general}
\dot e={\sf f}(t,e),
\quad \text{where} \quad
{\sf f}(t,e):=\bigl(\Lambda(P,\omega_0)-dI\bigr)e
-\frac{ic_1\omega_0}{2}\left[\eta(\bar a(t))-\eta(\bar a(t)-e)\right].
\end{equation}
The zero field \(\mathcal E(t,x)\equiv0\) in physical space is equivalent to the zero vector solution \(e(t)\equiv0\) in Fourier space. Since \(\eta\) is polynomial, \({\sf f}\) is continuously differentiable in \(e\), uniformly for \(t\in[0,T]\). Hence, near \(e=0\),
\begin{equation}
\label{eq:taylor-general}
{\sf f}(t,e)=A(t)e+r(t,e),
\quad \text{and} \quad
A(t):=\partial_e {\sf f}(t,0),
\end{equation}
where \(A(t)\) is the Jacobian of \({\sf f}(t,e)\) w.r.t. \(e\), evaluated at \(e=0\), and there is a constant $C_{r,T}<\infty$ such that 
\begin{equation}
\label{eq:remainder-bound}
\|r(t,e)\|\le C_{r,T}\|e\|^2
\end{equation}
for all \(t\in[0,T]\) and all \(e\) in a sufficiently small neighbourhood of the origin. If Assumption \ref{ass:scalar-coupling} holds (namely, $D=dI$), a direct differentiation gives
\begin{equation}
\label{eq:linearization-general}
A(t)=\Lambda(P,\omega_0)-dI-\frac{ic_1\omega_0}{2}Q(\bar a(t)),
\end{equation}
where \(Q(\bar a(t))\) is the Jacobian, at \(e=0\), of the map \(e\mapsto\eta(\bar a(t))-\eta(\bar a(t)-e)\).

\begin{proposition}
\label{prop:local-sync-general}
Under Assumptions~\ref{ass:exact-reconstruction}--\ref{ass:scalar-coupling}, there exists \(d_0<\infty\) such that, for every \(d>d_0\), the zero-error solution \(e(t) = 0\) of the synchronization-error dynamics is locally exponentially stable on \([0,T]\). In particular, \(e(t)=0\) is a locally asymptotically stable equilibrium of the truncated master--slave system on the finite time horizon.
\end{proposition}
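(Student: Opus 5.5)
The plan is a Lyapunov argument with $V(e)=\|e\|^2=e^He$ on $\mathbb C^{K+1}$, applied to the decomposition \eqref{eq:taylor-general}. Differentiating along trajectories of \eqref{eq:error-ode-general} gives
\[
\tfrac{d}{dt}\|e\|^2 = 2\,\mathrm{Re}\bigl(e^H A(t)e\bigr)+2\,\mathrm{Re}\bigl(e^H r(t,e)\bigr).
\]
The goal is to show that $\mathrm{Re}(e^HA(t)e)\le -\mu\|e\|^2$ uniformly on $[0,T]$ for some $\mu>0$ once $d$ is large, and then absorb the quadratic remainder locally.

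\textbf{Step 1: a uniform bound on the Jacobian.} Since $\Lambda(P,\omega_0)$ is diagonal with entries $-P(i\omega_0k)$, we have
\[
\mathrm{Re}\bigl(e^H\Lambda e\bigr)\le \lambda_{\max}\|e\|^2,
\qquad
\lambda_{\max}:=\max_{0\le k\le K}\mathrm{Re}\bigl(-P(i\omega_0k)\bigr)<\infty .
\]
This is finite because $K$ is fixed, even for antidissipative symbols such as those of the Kuramoto--Sivashinsky or Nikolaevskiy equations.

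Next, $Q(\bar a(t))$ is linear in $\bar a(t)$. Differentiating \eqref{eq:eta-general}, its entries are of the form $k(\bar a_{k-\ell}+\dots)$, with the Hermitian-symmetry bookkeeping for negative indices handled explicitly. Hence $\|Q(\bar a(t))\|\le C_K\|\bar a(t)\|$ for a constant $C_K$ depending only on $K$. By Parseval and Assumption~\ref{ass:bounded-master},
\[
|\bar a_k(t)|\le X^{-1}\int_0^X|u_K|\,dx\le |u_K|_{\infty,T},
\]
so $\sup_{t\le T}\|Q(\bar a(t))\|\le q_T:=C_K\sqrt{K+1}\,|u_K|_{\infty,T}$. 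Setting
\[
d_0:=\lambda_{\max}+\tfrac{|c_1|\omega_0}{2}\,q_T ,
\]
we obtain $\mathrm{Re}(e^HA(t)e)\le-(d-d_0)\|e\|^2$ for all $t\in[0,T]$ whenever $d>d_0$. This makes essential use of $D=dI$ (Assumption~\ref{ass:scalar-coupling}), since then $-dI$ shifts the whole numerical range uniformly.

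\textbf{Step 2: absorbing the remainder.} Here I would use \eqref{eq:remainder-bound}. A cleaner route is to note that $\eta$ is quadratic, so $r(t,e)=\tfrac{ic_1\omega_0}{2}\eta(e)$ up to sign; the constant $C_{r,T}$ is then global and independent of $t$. Let $\mu:=d-d_0>0$ and choose $\delta$ with $C_{r,T}\delta\le\mu/2$. On the ball $\|e\|<\delta$,
\[
\tfrac{d}{dt}\|e\|^2\le -\mu\|e\|^2 .
\]
A standard continuation argument then shows that if $\|e(0)\|<\delta$, the trajectory never leaves the ball and $\|e(t)\|\le \mathrm e^{-\mu t/2}\|e(0)\|$ on $[0,T]$. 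This is local exponential stability. Since $\mathsf f(t,0)=0$, zero is an equilibrium, and asymptotic stability on the horizon follows in the finite-time sense used in the statement.

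\textbf{Main obstacle.} I expect the main difficulty to be Step 1's bound on $Q$. It requires a careful treatment of the one-sided storage: the map $e\mapsto\eta$ involves $e_\ell^*$ for negative $\ell$, so it is only $\mathbb R$-linear, not $\mathbb C$-linear. I would handle this by working with the real inner product $\mathrm{Re}(e^H\cdot)$ throughout and bounding $Q$ as a real-linear operator on $\mathbb C^{K+1}\cong\mathbb R^{2K+2}$. Only an operator-norm bound is needed, so the $K$-dependence of $C_K$ (roughly of order $K^{3/2}$) is harmless at fixed $K$.
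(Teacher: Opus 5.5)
Your proposal is correct and follows essentially the same route as the paper's proof in Appendix~\ref{app:proof-prop1}. A large scalar gain, together with boundedness of $Q(\bar a(t))$ on $[0,T]$, makes the real part of $\langle e, A(t)e\rangle$ uniformly negative, and a quadratic Lyapunov function then absorbs the $\mathcal O(\|e\|^2)$ remainder on a small ball; your explicit $d_0$, the observation that $r(t,e)=\tfrac{ic_1\omega_0}{2}\eta(e)$ is globally quadratic, the invariance-of-the-ball argument, and treating $Q$ as a merely $\mathbb R$-linear operator are useful refinements of points the paper leaves implicit.
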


See Appendix \ref{app:proof-prop1} for a proof. Note that if $e(t)=0$ is exponentially stable then then $\mathcal E(t,x)=0$ in the physical space is exponentially stable as well.

\begin{remark}
\label{rem:non-diagonal-D}
The diagonal scalar coupling of Assumption \ref{ass:scalar-coupling} is not essential. The same local argument applies to a Hermitian coupling matrix \(D\) whenever
\[
D\succ \sup_{0\le t\le T} H_0(t),
\]
where the inequality is in the Loewner order and
\[
H_0(t):=\frac12\left(M(t)+M(t)^H\right),
\qquad
M(t):=\Lambda(P,\omega_0)-\frac{ic_1\omega_0}{2}Q(\bar a(t)).
\]
Here \(\Lambda(P,\omega_0)\) is the diagonal matrix of linear Fourier multipliers in \eqref{eq:general-Lambda}, and \(Q(\bar a(t))\) is the Jacobian of the quadratic convolution contribution; its construction is given in Appendix~\ref{app:proof-prop1}, Eq.~\eqref{eq:Q-def}.
\end{remark}

\begin{remark}
If Assumptions~\ref{ass:exact-reconstruction} and \ref{ass:bounded-master} are strengthened so that \(\hat a_k(t)=\bar a_k(t)\) for \(k=0,\ldots,K\) and all \(t\ge0\), and
\[
|u_K|_{\infty}:=\sup_{x\in\mathbb R,\,t\ge0}|u_K(t,x)|<\infty,
\]
then the coefficients \(\bar a_k(t)\) are uniformly bounded for all \(t\ge0\). In that case the same argument in the proof of Proposition \ref{prop:local-sync-general} yields constants \(\alpha>0\) and \(\rho>0\), independent of time, such that whenever \(\|e(0)\|<\rho\),
\[
\|e(t)\|\le C {\rm e}^{-\alpha t}\|e(0)\|,
\qquad t\ge0,
\]
for some \(C<\infty\). Equivalently, the zero synchronization-error solution, $e(t)=0$ (and \(\mathcal E(t,x) = 0\)) is locally exponentially asymptotically stable, and \( e(t) \to 0\) (respectively, $\mathcal E(t,x)=0$) exponentially fast as \(t\to\infty\).
\end{remark}

%
\section{Observational noise and stochastic slave model} \label{sStochasticSynch}

In practical applications, the observations from the master field can be expected to be contaminated by noise. In this section, we assume white Gaussian observational noise, which naturally yields a stochastic slave model. To be specific, the master system remains deterministic and belongs to the class \eqref{eq:intro-pde}, while the Fourier modes of the slave model become stochastic and can be described by an It\^o SDE.

%
\subsection{Noisy observations on a spatial grid}

Recall the spatial observation grid \(\mathcal{X}_N=\{x_0,\dots,x_{N-1}\}\subset[0,X)\). We assume that the deterministic master field \(u_K(t,x)\) is observed continuously in time, but only at the sites in \(\mathcal{X}_N\). The noiseless observation vector is
\[
u_N(t):=
\begin{bmatrix}
u_K(t,x_0)\\
\vdots\\
u_K(t,x_{N-1})
\end{bmatrix}\in\mathbb{R}^N
\]
and we assume additive Gaussian observation noise, hence the $N$ observations can be written as
\begin{equation}
\label{eq:noisy-observations}
z(t,x_i)=u_K(t,x_i)+\xi(t,x_i),
\qquad i=0,\dots,N-1,
\end{equation}
where $\xi(t,x_i)$ denotes a zero-mean Gaussian noise process. Let
\[
Z_N(t):=\left[ z(t,x_0), \ldots, z(t,x_{N-1}) \right]^\top
\]
be the vector of noisy observations at time $t$. Then
\begin{equation}
\label{eq:noisy-observation-vector}
Z_N(t)=u_N(t)+\xi_N(t),
\end{equation}
where $\xi_N(t) = [\xi(t,x_0), \ldots, \xi(t,x_{N-1})]^\top$.
Throughout this section we assume that the observation noise is temporally white and the measurement locations in $\mathcal{X}_N$ are distant enough to guarantee spatial independence, i.e., $\mathbb{E}[\xi_N(t)\xi_N(t)^\top] = \sigma^2 I$ for some $\sigma>0$ or, formally,
$
\xi_N(t)=\sigma \dot W_N(t),
$
where \(W_N(t)\) is an \(N\)-dimensional standard Wiener process. Equivalently, one may write the observation model \eqref{eq:noisy-observation-vector} in differential form, as the multivariate It\^o SDE with additive noise
\begin{equation}
\label{eq:obs-sde}
{\rm d}Z_N(t)=u_N(t)\,{\rm d}t+\sigma\, {\rm d}W_N(t).
\end{equation}

%

\subsection{Least-squares reconstruction of the Fourier modes}

We now translate the physical-space observation model into a stochastic model for the retained Fourier coefficients. Recall the Fourier matrix associated with the grid \(\mathcal X_N\), \(\Psi_K\in\mathbb C^{N\times(2K+1)}\), with columns ordered according to the retained indices in \(\mathcal R_K\), and let
\begin{equation}
\label{eq:R-matrix}
R:=(\Psi^H_K\Psi_K)^{-1}\Psi_K^H\in\mathbb{C}^{(2K+1)\times N}
\end{equation}
be the least-squares reconstruction matrix, assuming that \(N\ge 2K+1\) and \(\Psi_K^H\Psi_K\) is invertible. The matrix \(R\) maps physical-space samples into the full vector of retained coefficients, $\hat a_{-K:K}(t)=R Z_N(t)$, where the notation \(\hat a_{-K:K}(t)\) is the same as introduced in Section~\ref{sMS}. Since the fields are real-valued, the negative modes are determined by Hermitian symmetry. Hence, we can simply work with the one-sided vector
\[
\hat a(t)=\left[\hat a_0(t),\ldots,\hat a_K(t)\right]^\top\in\mathbb C^{K+1}.
\]

Let \(R_K\in\mathbb C^{(K+1)\times N}\) be the submatrix of \(R\) formed by the rows corresponding to modes \(k=0,\ldots,K\). Applying \(R_K\) to the observation SDE \eqref{eq:obs-sde} yields
\begin{equation}
\label{eq:noisy-modes-sde}
{\rm d}\hat a(t)=R_K\,{\rm d}Z_N(t)=R_K u_N(t)\,{\rm d}t+\sigma R_K\,{\rm d}W_N(t),
\end{equation}
where \(\hat a(t):=R_K Z_N(t)\) denotes the retained Fourier coordinates of the observation process. In the absence of noise, and assuming that the sampling grid has enough resolution for the least-squares reconstruction to be exact on the retained subspace, one has
$
R_K u_N(t)=\bar a(t)
$, where \(\bar a(t)\) is the deterministic master vector in \eqref{eq:master-vector-general}. Hence \eqref{eq:noisy-modes-sde} becomes
\begin{equation}
\label{eq:noisy-mode-estimate-ideal}
{\rm d}\hat a(t)=\bar a(t) {\rm d}t+\sigma R_K {\rm d}W_N(t).
\end{equation}

%
\subsection{Stochastic slave equation and synchronization error process}

We now replace the deterministic reconstruction in the slave equation by the noisy modes described by \eqref{eq:noisy-modes-sde}. Starting from the deterministic slave model \eqref{eq:slave-vector-general}, this substitution yields 
\begin{equation}
\label{eq:stochastic-slave}
{\rm d}b(t)=
\left[
\Lambda(P,\omega_0)b(t)
-\frac{i c_1\omega_0}{2}\eta(b(t))
+
D\bigl(\bar a(t)-b(t)\bigr)
\right]{\rm d}t
+
\sigma D R_K\,{\rm d}W_N(t).
\end{equation}
Note that the coupling term $D(\hat a-b)=D(\bar a + \sigma R_K\dot W_N - b)$ contributes the drift term \(D(\bar a-b){\rm d}t\) and the diffusion term \(\sigma D R_K\,{\rm d}W_N(t)\). The coupling matrix \(D\in\mathbb C^{(K+1)\times(K+1)}\) is kept general at this stage.



Recall the synchronization error \(e(t)=\bar a(t)-b(t)\) introduced in \eqref{eq:error-fourier-general}. Since the master system is deterministic and satisfies \eqref{eq:master-vector-general}, subtracting \eqref{eq:stochastic-slave} from \eqref{eq:master-vector-general} yields the error SDE
\begin{equation}
\label{eq:stochastic-error}
{\rm d}e(t)=
\left[
\bigl(\Lambda(P,\omega_0)-D\bigr)e(t)
-\frac{i c_1\omega_0}{2}
\Bigl(\eta(\bar a(t))-\eta(\bar a(t)-e(t))\Bigr)
\right]{\rm d}t
-
\sigma D R_K\, {\rm d}W_N(t).
\end{equation}
Equation \eqref{eq:stochastic-error} is the stochastic counterpart of the deterministic error equation \eqref{eq:error-ode-general}. It shows explicitly how noisy observations of the deterministic master system induce a diffusion term in the slave dynamics and, consequently, in the synchronization error process.

The stochastic synchronization problem can be formulated as a stability problem for the SDE \eqref{eq:stochastic-error}. Since the additive noise continuously excites the slave, one cannot expect exact pathwise convergence \(e(t)\to 0\). Instead, the natural notions of synchronization are probabilistic, e.g., convergence in mean square to a noise-dependent neighbourhood of the origin, or bounds on tail probabilities, as explored below.

%
\section{Stochastic stability of master-slave synchronization}
\label{sec:stochastic-synchronization}

In this section we tackle the analysis of the stochastic synchronization error characterized by Eq. \eqref{eq:stochastic-error}. In the absence of noise ($\sigma=0$), \eqref{eq:stochastic-error} reduces to the deterministic ODE \eqref{eq:error-ode-general}, which is locally exponentially stable around $e(t)=0$. For arbitrary members of the PDE family \eqref{eq:intro-pde}, we first establish finite-horizon stochastic closeness to the deterministic error path and then derive a time-uniform $\mathcal O(\sigma^2)$ mean-square bound under global one-sided dissipativity. Finally, Section~\ref{sec:physical-error} combines the latter estimate with Fourier projection and Galerkin-error bounds to control the physical-space synchronization error between the full, infinite-dimensional master field and the truncated (finite dimensional) slave, uniformly in $K$.

%

\subsection{Setup and assumptions}

Using the drift notation introduced in \eqref{eq:error-ode-general}, the SDE \eqref{eq:stochastic-error} can be written compactly as
\begin{equation}
\label{eq:stochastic-error-b}
{\rm d}e(t)={\sf f}(t,e(t))\,{\rm d}t-\sigma D R_K\, {\rm d}W_N(t).
\end{equation}
In the analysis below, we first work on a fixed finite horizon \([0,T]\) and use Assumptions~\ref{ass:exact-reconstruction}--\ref{ass:scalar-coupling} from Section~\ref{sMS}. In particular, Assumption~\ref{ass:bounded-master} implies the existence of a constant \(C_{a,T}<\infty\) such that
\begin{equation}
\label{eq:master-coeff-bound-V}
\sup_{0\le t\le T}\|\bar a(t)\|\le C_{a,T}.
\end{equation}
We also introduce a deterministic reference path used for comparison with the stochastic error.

\begin{assumption}[Reference error path]
\label{ass:reference-error-path}
The deterministic reference error \(e_{\mathrm{ref}}(t)\), \(t\in[0,T]\), solves the ODE
\begin{equation}
\label{eq:det-ref-error}
\dot e_{\mathrm{ref}}(t)={\sf f}\bigl(t,e_{\mathrm{ref}}(t)\bigr),
\end{equation}
with the same initial slave state as the stochastic model, i.e.,
\begin{equation}
\label{eq:common-initialization}
e(0)=e_{\mathrm{ref}}(0).
\end{equation}
Moreover, the reference path is bounded on \([0,T]\), i.e., there exists \(C_{\mathrm{ref},T}<\infty\) such that
\begin{equation}
\label{eq:reference-bound}
\sup_{0\le t\le T}\|e_{\mathrm{ref}}(t)\|\le C_{\mathrm{ref},T}.
\end{equation}
\end{assumption}

Intuitively, the combination of Assumptions~\ref{ass:bounded-master} and \ref{ass:reference-error-path} ensures that the Jacobian contribution produced by the quadratic convolution term, $\eta(\bar a)-\eta(\bar a-e)$ in the drift ${\sf f}(t,e)$, is uniformly bounded on \([0,T]\). Then, under the scalar coupling condition in Assumption~\ref{ass:scalar-coupling}, the Hermitian part of the linearized deterministic error operator can be made uniformly negative definite on \([0,T]\) by choosing \(d\) large enough. We elaborate this argument in Section \ref{ssControl} below. The restriction to scalar coupling is only used to simplify the presentation: the same analysis extends to general coupling matrices with sufficiently large positive Hermitian part (see Remark~\ref{rem:non-diagonal-D}).

%
\subsection{Control of the linearization along the reference trajectory} \label{ssControl}

The Jacobian of the drift ${\sf f}(t,e)$ w.r.t. \(e\) is
\begin{equation}
\label{eq:jacobian-ref-general}
\partial_e {\sf f}(t,e)=\Lambda(P,\omega_0)-D-\frac{i c_1\omega_0}{2}Q\bigl(\bar a(t)-e\bigr),
\end{equation}
where \(Q(\cdot)\) denotes the Jacobian contribution associated with the quadratic convolution term. We first show that the coupling matrix \(D\) can be used to force uniform dissipativity of the linearization along the reference path \(e_{\mathrm{ref}}(t)\).

\begin{lemma}
\label{lem:ref-dissipativity}
If Assumptions~\ref{ass:exact-reconstruction}--\ref{ass:reference-error-path} hold, then there exists \(d_0<\infty\) such that, for every \(d>d_0\), the matrix
\[
J_{\mathrm{ref}}(t):=\partial_e {\sf f}\bigl(t,e_{\mathrm{ref}}(t)\bigr)
\]
satisfies
\begin{equation}
\label{eq:ref-dissipativity}
\frac12\Bigl(J_{\mathrm{ref}}(t)+J_{\mathrm{ref}}(t)^H\Bigr)\preceq -\alpha I,
\qquad \forall t\in[0,T],
\end{equation}
for some $\alpha>0$.
\end{lemma}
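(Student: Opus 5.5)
Under Assumption~\ref{ass:scalar-coupling}, $D=dI$, so by \eqref{eq:jacobian-ref-general}
\[
J_{\mathrm{ref}}(t)=\Lambda(P,\omega_0)-\frac{ic_1\omega_0}{2}Q\bigl(\bar a(t)-e_{\mathrm{ref}}(t)\bigr)-dI .
\]
Its Hermitian part is therefore $H(t)-dI$, where $H(t)$ is the Hermitian part of $M_{\mathrm{ref}}(t):=\Lambda(P,\omega_0)-\frac{ic_1\omega_0}{2}Q(\bar a(t)-e_{\mathrm{ref}}(t))$. The plan is to show that $\sup_{t\in[0,T]}\lambda_{\max}(H(t))\le \|H(t)\|_2\le C_H<\infty$ for some constant $C_H$. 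We then set $d_0:=C_H$. For $d>d_0$ this gives $\tfrac12(J_{\mathrm{ref}}+J_{\mathrm{ref}}^H)\preceq (C_H-d)I$, so the claim holds with $\alpha:=d-d_0>0$.

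The bound on $H(t)$ splits into two terms. The diagonal matrix $\Lambda(P,\omega_0)$ is $t$-independent with finitely many entries ($K$ is fixed). Its Hermitian part is $\operatorname{diag}(-\operatorname{Re}P(i\omega_0k))_{k=0}^K$, whose norm is bounded by $\max_{0\le k\le K}|P(i\omega_0 k)|$, which is finite by \eqref{eq:symbol-expanded}. For the nonlinear term, note that $\eta$ is a quadratic (real-bilinear in the pair $(z,z^*)$) map on $\mathbb C^{K+1}$. Hence its Jacobian $Q(z)$ depends linearly on $z$, and there is a constant $c_{K}$, depending only on $K$, with $\|Q(z)\|_2\le c_K\|z\|$. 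Concretely, each entry of $Q(z)$ is of the form $\pm k\,z_{j}$ or $\pm k\,z_j^*$ with $|k|\le K$, so a crude Frobenius bound $c_K\le 2K\sqrt{(K+1)}$ or similar suffices. Applying this at $z=\bar a(t)-e_{\mathrm{ref}}(t)$ and using \eqref{eq:master-coeff-bound-V}, which follows from Assumption~\ref{ass:bounded-master}, together with \eqref{eq:reference-bound}, gives
\[
\Bigl\|\tfrac{c_1\omega_0}{2}Q\bigl(\bar a(t)-e_{\mathrm{ref}}(t)\bigr)\Bigr\|_2\le \tfrac{|c_1|\omega_0}{2}c_K\,(C_{a,T}+C_{\mathrm{ref},T})
\]
uniformly in $t\in[0,T]$. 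Taking Hermitian parts does not increase the spectral norm. Therefore
\[
C_H:=\max_k|P(i\omega_0k)|+\tfrac{|c_1|\omega_0}{2}c_K(C_{a,T}+C_{\mathrm{ref},T})
\]
works.

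The main technical point is the linear-in-$z$ bound on $Q$. There is a subtlety here: the map $e\mapsto\eta(\bar a-e)$ involves the conjugates $e_\ell^*$ through Hermitian symmetry, so $Q$ is really a real-linear operator. I would handle this as in Appendix~\ref{app:proof-prop1} (Eq.~\eqref{eq:Q-def}), interpreting the Hermitian part with respect to the real inner product $\operatorname{Re}\langle\cdot,\cdot\rangle$ on $\mathbb C^{K+1}\cong\mathbb R^{2K+2}$. The estimate only needs the operator norm of $Q(z)$, and that norm is controlled by $\|z\|$ times a $K$-dependent constant in either interpretation. Since $K$ is fixed, no uniformity in $K$ is required, and this crude bound is enough.
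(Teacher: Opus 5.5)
Your proposal is correct and follows essentially the same argument as the paper: bound the Hermitian part of $M_{\mathrm{ref}}(t)=J_{\mathrm{ref}}(t)+dI$ uniformly on $[0,T]$ using the linear dependence of $Q$ on its argument together with $C_{a,T}$ and $C_{\mathrm{ref},T}$, then take $d$ above that bound and let $\alpha$ be the gap. The differences are minor. You use an explicit triangle-inequality constant $C_H$ where the paper uses $m_*=\sup_t\lambda_{\max}$, and you spell out the real-linear reading of $Q$ via $\operatorname{Re}\langle\cdot,\cdot\rangle$, which the paper leaves implicit but which is exactly the form used later in Proposition~\ref{prop:stochastic-closeness}.
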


See Appendix \ref{ap:ref-dissipativity} for a proof.

%
\subsection{Stochastic closeness to the deterministic reference path}

Define the deviation process
$
\Delta(t):=e(t)-e_{\mathrm{ref}}(t).
$
Since \(e(t)\) satisfies \eqref{eq:stochastic-error-b} and \(e_{\mathrm{ref}}(t)\) solves \(\dot e={\sf f}(t,e)\), we obtain
\begin{equation}
\label{eq:Delta-SDE}
{\rm d}\Delta(t)=\Bigl({\sf f}\bigl(t,e_{\mathrm{ref}}(t)+\Delta(t)\bigr)-{\sf f}\bigl(t,e_{\mathrm{ref}}(t)\bigr)\Bigr)\,{\rm d}t
-\sigma d R_K\, {\rm d}W_N(t).
\end{equation}

Moreover, \({\sf f}(t,\cdot)\) is polynomial of degree two, which implies that it is continuously differentiable and admits a second-order Taylor remainder. Fix a radius
\(\rho>0\) such that the ball
\[
B_\rho:=\{\xi \in\mathbb C^{K+1}:\|\xi\|\le \rho\}
\]
lies inside the local neighbourhood in which the linearized
dissipativity estimate of Lemma~\ref{lem:ref-dissipativity} is used.
Then, for all \(t\in[0,T]\) and all \(\Delta\in B_\rho\),
\begin{equation}
\label{eq:Delta-linearization}
{\sf f}\bigl(t,e_{\mathrm{ref}}+\Delta\bigr)
-
{\sf f}\bigl(t,e_{\mathrm{ref}}\bigr)
=
J_{\mathrm{ref}}(t)\Delta+r(t,\Delta),
\end{equation}
where \(J_{\mathrm{ref}}(t)\) is given by Lemma~\ref{lem:ref-dissipativity}
and there exists \(C_{r,\rho,T}<\infty\) such that
\begin{equation}
\label{eq:r-bound-ref}
\sup_{0 \le t \le T} \|r(t,\Delta)\|\le C_{r,\rho,T}\|\Delta\|^2 \quad \forall \Delta \in B_\rho.
\end{equation}
We localize the stochastic deviation process $\Delta(t)$ to this neighbourhood by introducing the first exit time from the ball $B_\rho$, denoted
\begin{equation}
\tau_\rho:=\inf\{t\ge 0:\|\Delta(t)\|\ge \rho\}\wedge T.
\label{eqDefTau}
\end{equation}
We can now state a first closeness result for the stochastic synchronization error. 

\begin{proposition}
\label{prop:stochastic-closeness}
Let Assumptions~\ref{ass:exact-reconstruction}--\ref{ass:reference-error-path} hold and let \(d>d_0\), where \(d_0\) is given by Lemma \ref{lem:ref-dissipativity}. Then there exists a constant \(C_T<\infty\), depending only on \(T\), such that
\begin{equation}
\label{eq:stochastic-closeness-bound}
\mathbb{E}\left[\sup_{0\le s\le \tau_\rho}\|\Delta(s)\|^2\right]
\le
C_T\,\sigma^2 d^2 \|R_K\|_F^2.
\end{equation}
In particular, the mean-square deviation is \(O(\sigma^2)\) on the stopped interval $[0,\tau_\rho]$. Moreover, for every \(\ell>0\),
\begin{equation}
\label{eq:Delta-prob-bound}
\mathbb{P}\left(\sup_{0\le s\le \tau_\rho}\|\Delta(s)\|>\ell\right)
\le
\frac{C_T\,\sigma^2 d^2 \|R_K\|_F^2}{\ell^2}.
\end{equation}
\end{proposition}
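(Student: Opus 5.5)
We will apply Itô's formula to $\|\Delta(t)\|^2$, treating $\mathbb C^{K+1}$ as a real space with inner product $\mathrm{Re}\langle\cdot,\cdot\rangle$. We then control the drift with Lemma~\ref{lem:ref-dissipativity} and the Taylor remainder bound \eqref{eq:r-bound-ref}, and handle the supremum with the Burkholder--Davis--Gundy (BDG) inequality. Because of \eqref{eq:common-initialization}, $\Delta(0)=0$, so the only source term is the noise.

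\textbf{Step 1 (Itô expansion).} From \eqref{eq:Delta-SDE}, \eqref{eq:Delta-linearization}, and $D=dI$, for $t\le\tau_\rho$ we get
\[
\|\Delta(t)\|^2=\int_0^{t}2\,\mathrm{Re}\langle \Delta,J_{\mathrm{ref}}\Delta+r(s,\Delta)\rangle\,{\rm d}s+\sigma^2d^2\|R_K\|_F^2\,t+M_t .
\]
Here $M_t=-2\sigma d\int_0^t\mathrm{Re}\langle\Delta(s),R_K\,{\rm d}W_N(s)\rangle$, and $\|R_K\|_F^2$ arises as the trace of $R_KR_K^H$. The Hermitian-part bound \eqref{eq:ref-dissipativity} gives $2\mathrm{Re}\langle\Delta,J_{\mathrm{ref}}\Delta\rangle\le-2\alpha\|\Delta\|^2$. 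For the remainder, $2|\langle\Delta,r\rangle|\le 2C_{r,\rho,T}\|\Delta\|^3\le 2C_{r,\rho,T}\rho\|\Delta\|^2$ on $B_\rho$. The drift is therefore bounded by $L\|\Delta\|^2$ with $L:=(2C_{r,\rho,T}\rho-2\alpha)_+$. We do not need $\rho$ small: a Gronwall constant is enough, since the claimed constant depends on $T$ anyway. If we chose $\rho\le\alpha/C_{r,\rho,T}$ we could even drop this term, but that choice is optional.

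\textbf{Step 2 (supremum and BDG).} Set $Y(t):=\mathbb E\sup_{s\le t\wedge\tau_\rho}\|\Delta(s)\|^2$. Because the process is stopped at $\tau_\rho$, $\|\Delta\|\le\rho$ on the stopped interval, so $M$ is a true martingale and BDG applies. BDG gives
\[
\mathbb E\sup_{s\le t\wedge\tau_\rho}|M_s|\le C\sigma d\,\|R_K\|_2\,\mathbb E\Bigl(\int_0^{t\wedge\tau_\rho}\|\Delta\|^2{\rm d}s\Bigr)^{1/2}\le \tfrac12 Y(t)+C'\sigma^2d^2\|R_K\|_F^2\,t,
\]
where the last step uses $\sqrt{\sup\|\Delta\|^2\cdot t}\le\frac{1}{2C''}\sup\|\Delta\|^2+\frac{C''}{2}t$ together with $\|R_K\|_2\le\|R_K\|_F$. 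Taking suprema and expectations in Step 1 yields
\[
Y(t)\le L\int_0^tY(s)\,{\rm d}s+\tfrac12Y(t)+(1+C')\sigma^2d^2\|R_K\|_F^2\,t .
\]
Since $Y\le\rho^2$ is finite, we can absorb the $\tfrac12Y(t)$ term into the left-hand side. Gronwall then gives $Y(T)\le 2(1+C')T\,{\rm e}^{2LT}\sigma^2d^2\|R_K\|_F^2$, which is \eqref{eq:stochastic-closeness-bound}.

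\textbf{Step 3 (tail bound).} Applying Markov's inequality to $\sup_{s\le\tau_\rho}\|\Delta(s)\|^2$ gives \eqref{eq:Delta-prob-bound} immediately.

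The main obstacle is the handling of the sup inside the expectation via BDG and the absorption argument, together with bookkeeping of the constant's dependence. Strictly, $C_T$ depends on $\alpha$, $\rho$ and $C_{r,\rho,T}$ as well as on $T$. We read "depending only on $T$" as meaning independent of $\sigma$, $d$ and $R_K$. This independence from $d$ holds as long as we do not let $L$ grow with $d$, which is guaranteed because $\alpha$ improves as $d$ increases.
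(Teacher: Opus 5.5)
Your proof is correct, but it differs from the paper's in two places. First, the paper shrinks $\rho$ until $2C_{r,\rho,T}\rho\le\alpha$, so the whole drift is $\le-\alpha\|\Delta\|^2$ and can be discarded. Second, it controls the martingale with Jensen, Doob's $L^2$ maximal inequality and It\^o's isometry, obtaining $\mathbb E\sup_{s\le t}|M_s|\le 4\sigma d\|R_K\|_F\bigl(\int_0^tY(s)\,{\rm d}s\bigr)^{1/2}$. It then closes with Young's inequality and Gronwall at rate $\tfrac12$, giving the explicit constant $C_T=(T+8){\rm e}^{T/2}$.

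You instead use the $p=1$ BDG (Davis) inequality and absorb $\tfrac12Y(t)$ into the left-hand side. This is legitimate precisely because stopping gives $Y\le\rho^2<\infty$. You also carry $L=(2C_{r,\rho,T}\rho-2\alpha)_+$ through Gronwall rather than shrinking $\rho$.

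Your route has two advantages:
\begin{itemize}
\item When $L=0$ you get $C_T=2(1+C')T$, which is linear in $T$, has no exponential factor, and vanishes as $T\to0$. The paper's Young step instead produces an additive $8\sigma^2d^2\|R_K\|_F^2$ that does not scale with $t$, plus an artificial ${\rm e}^{T/2}$.
\item It shows that, on a finite horizon, the dissipativity of Lemma~\ref{lem:ref-dissipativity} only fixes the sign of the drift. Any radius on which the quadratic remainder bound holds would do, at the price of a factor ${\rm e}^{2LT}$.
\end{itemize}
The paper's route, in turn, needs only Doob's inequality and the It\^o isometry, with no unspecified universal BDG constant, and its constant is fully explicit.

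On your final remark: $L$ stays bounded in $d$ because $r$ is exactly the quadratic part of ${\sf f}$, which does not involve $d$, so $L\le2C_{r,\rho,T}\rho$. Appealing to ``$\alpha$ improves as $d$ increases'' is less safe, since $e_{\mathrm{ref}}$, and hence $m_*$ in the proof of Lemma~\ref{lem:ref-dissipativity}, also depends on $d$.
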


\begin{proof}
On the interval \([0,\tau_\rho]\), equations \eqref{eq:Delta-SDE} and \eqref{eq:Delta-linearization} yield
\[
{\rm d}\Delta(t)=\Bigl(J_{\mathrm{ref}}(t)\Delta(t)+r(t,\Delta(t))\Bigr)\,{\rm d}t
-\sigma d R_K\, {\rm d}W_N(t).
\]
Define \(V(\Delta):=\|\Delta\|^2\) and let $\langle x,y \rangle = x^Hy$ and $\operatorname{Re}\langle x,y \rangle$ denote the standard Euclidean inner product and its real part, respectively. Applying It\^o's formula yields
\begin{align}
{\rm d}V(\Delta(t))
&=
2\operatorname{Re}\langle \Delta(t),J_{\mathrm{ref}}(t)\Delta(t)\rangle\,{\rm d}t
+
2\operatorname{Re}\langle \Delta(t),r(t,\Delta(t))\rangle\,{\rm d}t \nonumber\\
&\quad
+\sigma^2 d^2 \operatorname{tr}(R_KR_K^H)\,{\rm d}t
-2\sigma d\,\operatorname{Re}\langle \Delta(t),R_K\,{\rm d}W_N(t)\rangle.
\label{eq:Ito-Delta}
\end{align}
If we note that $\operatorname{Re}\langle \Delta , J_{ref}(t)\Delta \rangle = \frac{1}{2} \Delta^H \left( J_{ref}(t) + J_{ref}(t)^H \right) \Delta$, then it is apparent, using Lemma \ref{lem:ref-dissipativity}, that
\[
\operatorname{Re}\langle \Delta,J_{\mathrm{ref}}(t)\Delta\rangle\le -\alpha \|\Delta\|^2.
\]
Moreover, the Cauchy-Schwarz inequality together with \eqref{eq:r-bound-ref} yields,
\[
\operatorname{Re}\langle \Delta,r(t,\Delta)\rangle
\le \|\Delta\|\,\|r(t,\Delta)\|
\le C_{r,\rho,T} \|\Delta\|^3.
\]
Choosing \(\rho>0\) as small as needed so that \(2C_{r,\rho,T}\rho\le \alpha\), we obtain, on \([0,\tau_\rho]\),
\[
2\operatorname{Re}\langle \Delta,J_{\mathrm{ref}}(t)\Delta\rangle
+
2\operatorname{Re}\langle \Delta,r(t,\Delta)\rangle
\le
-\alpha \|\Delta\|^2.
\]
Combining the bounds above with \eqref{eq:Ito-Delta} yields
\begin{equation}
\label{eq:Ito-Delta-est}
{\rm d}V(\Delta(t))
\le
\left(
-\alpha \|\Delta(t)\|^2+\sigma^2 d^2 \|R_K\|_F^2
\right){\rm d}t
-2\sigma d\,\operatorname{Re}\langle \Delta(t),R_K\,{\rm d}W_N(t)\rangle.
\end{equation}

Integrating \eqref{eq:Ito-Delta-est} up to \(t\wedge\tau_\rho\), and using
\(\Delta(0)=0\) (see Assumption \ref{ass:reference-error-path}), we obtain
\begin{equation}
\label{eq:Delta-stopped-integral}
\|\Delta(t\wedge\tau_\rho)\|^2
\le
\sigma^2d^2\|R_K\|_F^2 t
+
M_t,
\end{equation}
where
\begin{equation}
\label{eq:martingale-M-def}
M_t:=-2\sigma d\int_0^{t\wedge\tau_\rho}
\operatorname{Re}\langle \Delta(s),R_K\,{\rm d}W_N(s)\rangle .
\end{equation}
Since \(W_N(s)\) is a real
\(N\)-dimensional Wiener process, we can write
\[
\operatorname{Re}\langle \Delta(s),R_K\,{\rm d}W_N(s)\rangle
=
g(s)^\top {\rm d}W_N(s),
\quad \text{where} \quad
 g(s):=\operatorname{Re}\{R_K^H\Delta(s)\}\in\mathbb R^N
\]
and, therefore,
\[
M_t=-2\sigma d\int_0^t
\mathbf 1_{\{s\le \tau_\rho\}}g(s)^\top {\rm d}W_N(s).
\]
The process \(\Delta(t)\) is adapted and continuous, hence \(g(t)\) is progressively
measurable. Moreover, for \(s \in [0,\tau_\rho]\), \(\|\Delta(s)\|\le \rho\), hence
\begin{equation}
\label{eq3ineq}
\|g(s)\|
\le
\|R_K^H\Delta(s)\|
\le
\|R_K\|_F\|\Delta(s)\|
\le
\rho\|R_K\|_F,
\qquad 0\le s\le \tau_\rho.
\end{equation}
It follows that
\[
\mathbb E\int_0^T
\mathbf 1_{\{s\le \tau_\rho\}}\|g(s)\|^2\,{\rm d}s
\le
T\rho^2\|R_K\|_F^2
<\infty,
\]
which implies that the stochastic integral in \eqref{eq:martingale-M-def} is well defined, and
\(M_t\) is a continuous square-integrable martingale. By It\^o's isometry, and using the second inequality in \eqref{eq3ineq},
\begin{equation}
\label{eq:ito-isometry-M}
\mathbb E|M_t|^2
=
4\sigma^2d^2\,
\int_0^t
\mathbb{E}\left[
	\textbf{1}_{s \le \tau_p}(s) \|g(s)\|^2
\right] {\rm d}s
\le
4\sigma^2d^2\|R_K\|_F^2
\int_0^t
\mathbb E\|\Delta(s\wedge\tau_\rho)\|^2\,{\rm d}s .
\end{equation}
Now, for \(0\le t\le T\) define
\[
Y(t):=\mathbb E\left[
\sup_{0\le s\le t}\|\Delta(s\wedge\tau_\rho)\|^2
\right].
\]
Taking suprema on both sides of \eqref{eq:Delta-stopped-integral}, we readily see that
\begin{equation}
\label{eqSupremumY}
Y(t)
\le
\sigma^2d^2\|R_K\|_F^2 T +
\mathbb E\left[\sup_{0\le s\le t}|M_s|\right].
\end{equation}
By Jensen's inequality and Doob's \(L^2\) maximal inequality,
\[
\mathbb E\left[\sup_{0\le s\le t}|M_s|\right]
\le
\sqrt{
	\mathbb E\left[\sup_{0\le s\le t}|M_s|^2\right]
}
\le
2\sqrt{\mathbb E|M_t|^2 }.
\]
Using the inequalities above, together with \eqref{eq:ito-isometry-M} and the bound
\[
\mathbb E\|\Delta(s\wedge\tau_\rho)\|^2\le Y(s),
\]
we obtain
\begin{equation}
\label{eqIsometry}
\mathbb E\left[\sup_{0\le s\le t}|M_s|\right]
\le
4\sigma d\|R_K\|_F
\sqrt{
\int_0^t Y(s)\,{\rm d}s
}.
\end{equation}
Then, via \eqref{eqSupremumY} and \eqref{eqIsometry} we arrive at
\begin{equation}
\label{eqineqY}
Y(t)
\le
\sigma^2d^2\|R_K\|_F^2 T +
4\sigma d\|R_K\|_F
\sqrt{
\int_0^t Y(s)\,{\rm d}s
}.
\end{equation}
Using Young's inequality on the second term in the right-hand side of \eqref{eqineqY} yields
\[
4\sigma d\|R_K\|_F
\sqrt{
\int_0^t Y(s)\,{\rm d}s
}
\le
\frac12\int_0^t Y(s)\,{\rm d}s
+8\sigma^2d^2\|R_K\|_F^2
\]
and, therefore,
\begin{equation}
\label{eqineqY2}
Y(t)
\le
\sigma^2d^2\|R_K\|_F^2 (T+8)+\frac12\int_0^t Y(s)\,{\rm d}s .
\end{equation}
Finally, Gronwall's inequality in integral form combined with \eqref{eqineqY2} yields
\[
Y(T)
\le
\sigma^2d^2\|R_K\|_F^2(T+8){\rm e}^{\frac{1}{2}T}.
\]
Therefore,
\[
\mathbb E\left[
\sup_{0\le s\le \tau_\rho}\|\Delta(s)\|^2
\right]
\le Y(T) \le 
C_T\sigma^2d^2\|R_K\|_F^2,
\]
where $C_T=(T+8){\rm e}^{\frac{1}{2}T}<\infty$ is a constant that depends only on $T$. This proves \eqref{eq:stochastic-closeness-bound}. Finally, \eqref{eq:Delta-prob-bound} follows easily from Markov's inequality.

\end{proof}

%

\begin{remark}
\label{rem:exit-probability}
The stopping time \(\tau_\rho\) is only needed because the dissipativity argument is local. Proposition \ref{prop:stochastic-closeness} also implies that the event \(\{\tau_\rho<T\}\) has small probability in the small-noise regime. Indeed, on the event $\{\tau_\rho<T\}$, continuity of \(\Delta\) yields
$
\sup_{0\le s\le \tau_\rho}\|\Delta(s)\|^2=\rho^2,
$
hence
\begin{equation}
\label{eqPathwise}
\rho^2\,\mathbf 1_{\{\tau_\rho<T\}}
\le
\sup_{0\le s\le \tau_\rho}\|\Delta(s)\|^2 ~~\text{a.s.},
\end{equation}
where the inequality is understood pathwise, with the sample point suppressed from the notation. Taking expectations on both sides of \eqref{eqPathwise} we arrive at
\[
\rho^2\,\mathbb P(\tau_\rho<T)
\le
\mathbb E\!\left[\sup_{0\le s\le \tau_\rho}\|\Delta(s)\|^2\right]
\]
and Proposition \ref{prop:stochastic-closeness} yields
\begin{equation}
\label{eq:exit-probability}
\mathbb P(\tau_\rho<T)
\le
\frac{C_T}{\rho^2}
\sigma^2 d^2\|R_K\|_F^2.
\end{equation}
In particular, \(\mathbb P(\tau_\rho<T)=O(\sigma^2)\) as \(\sigma\to 0\).
\end{remark}

%

%
\subsection{A globally dissipative subclass}

The localization by \(\tau_\rho\) in Proposition \ref{prop:stochastic-closeness} is needed because the argument only uses local dissipativity of the drift. For a natural subclass of the PDE family, however, one can obtain a global one-sided dissipativity estimate and therefore remove the stopping time altogether.

Consider the real part of the linear symbol $P$, namely
\begin{equation}
\label{eq:qstar-def}
\mathfrak q(\xi):=\operatorname{Re}\bigl[-P(i\xi)\bigr]
=
c_2\xi^2-c_4\xi^4+c_6\xi^6,
\qquad
\xi\in\mathbb R,
\end{equation}
and assume that
\begin{equation}
\label{eq:qstar-bounded}
\mathfrak q_*:=\sup_{\xi\in\mathbb R}\mathfrak q(\xi)<\infty.
\end{equation}
This condition holds, for instance, for the generalized Kuramoto--Sivashinsky equation with \(c_4>0\), for Burgers' equation with \(c_2\le 0\), for Kawahara-type equations (for which \(\mathfrak q\equiv 0\)), and for other members of the family whose even-order linear symbol is bounded above.

Let \(\mathcal{E}_{\mathrm{ref}}(t,x)\) denote the deterministic error field associated with the reference Fourier coefficients \(e_{\mathrm{ref}}(t)\), and construct the reference slave field, in physical space, as
\[
v_{\mathrm{ref}}(t,x):=u_K(t,x)-\mathcal{E}_{\mathrm{ref}}(t,x).
\]
Assume, moreover, that
\begin{equation}
\label{eq:ref-slave-Linfty-bound}
B_{\mathrm{ref}} :=\sup_{0\le t\le T}\|v_{\mathrm{ref}}(t,\cdot)\|_{L^\infty(0,X)} 
= \sup_{0\le t\le T} \max_{0 \le x < X} |v_{\mathrm{ref}}(t,x) | <\infty.
\end{equation}
Since \(v_{\mathrm{ref}}(t,x)\) can be written as a trigonometric polynomial of degree at most $K<\infty$, the derivatives w.r.t. $x$ are also uniformly bounded. Indeed, by the Bernstein inequality for trigonometric polynomials,
\begin{equation}
\label{eq:ref-slave-derivative-bound}
\|\partial_x v_{\mathrm{ref}}(t,\cdot)\|_{L^\infty(0,X)}
\le K\omega_0\|v_{\mathrm{ref}}(t,\cdot)\|_{L^\infty(0,X)},
\qquad 0\le t\le T,
\end{equation}
and, as a consequence,
\begin{equation}
\label{eq:L-ref-def}
L_{\mathrm{ref}}:=\sup_{0\le t\le T}\|\partial_x v_{\mathrm{ref}}(t,\cdot)\|_{L^\infty(0,X)}
\le K\omega_0 B_{\mathrm{ref}}<\infty.
\end{equation}
Let us remark that \eqref{eq:ref-slave-Linfty-bound} implies \eqref{eq:L-ref-def} only if we assume a fixed Fourier truncation of order $K<\infty$. An \(L^\infty(0,X)\) bound on a function does not, in general, imply an \(L^\infty(0,X)\) bound on its derivative.

For a sequence of Fourier coefficients \(z=(z_k)_{k\in\mathbb Z}\), let us denote the inverse Fourier-series operator as
\[
\mathcal F^{-1}z(x):=\sum_{k=-\infty}^{\infty}z_k {\rm e}^{ik\omega_0 x}.
\]
When a real physical field is assumed and the Fourier coefficients are given as a vector \(z\in\mathbb C^{K+1}\), we abuse notation slightly and write
$$
\mathcal F^{-1}z(x) = \sum_{k=-K}^{-1} z_{-k}^* {\rm e}^{ik\omega_0x} 
+ \sum_{k=0}^K z_k {\rm e}^{ik\omega_0x}.
$$ 
With this convention, for vectors \(z,w\in\mathbb C^{K+1}\), let us define
\begin{equation}
\label{eq:fourier-L2-inner-product}
\langle z,w\rangle_{K;L^2}
:=
X\left( z_0^*w_0+2\sum_{k=1}^{K}z_k^*w_k\right)
\quad \text{and the norm} \quad
\|z\|_{K;L^2}^2:=\langle z,z\rangle_{K;L^2}.
\end{equation}
We readily realise that Parseval's relation for periodic signals yields 
\begin{equation}
\label{eq:parseval-induced-norm}
\|z\|_{K;L^2}^2=\|\mathcal F^{-1} z\|_{L^2(0,X)}^2,
\end{equation}
where $\| v \|_{L^2(0,X)}^2 = \int_0^X |v(x)|^2 {\rm d}x$.
In particular, for the deviation process with coefficients \(\Delta(t) \in \mathbb{C}^{K+1}\), we can write
$$
h_\Delta(t,x):= \left[ \mathcal F^{-1}\Delta(t) \right](x)
=\Delta_0(t)+\sum_{k=1}^K \Delta_k(t){\rm e}^{ik\omega_0 x} + \sum_{k=1}^K \Delta_k^*(t){\rm e}^{-ik\omega_0 x}.
$$
The zero mode is kept because Assumption~\ref{ass:reference-error-path} only gives \(\Delta(0)=0\) and, in general, the projected noise may excite \(\Delta_0(t)\) for \(t>0\). Moreover,
$$
\|\Delta(t)\|_{K;L^2}^2
= X|\Delta_0(t)|^2+2X\sum_{k=1}^{K}|\Delta_k(t)|^2
= \| h_\Delta(t,\cdot) \|_{L^2(0,X)}^2.
$$
The lemma below identifies the one-sided global dissipativity condition implied by the inequalities \eqref{eq:qstar-bounded} and \eqref{eq:ref-slave-Linfty-bound}. 

%
\begin{lemma}
\label{lem:global-dissipativity-fourier}
Assume that \eqref{eq:qstar-bounded} and \eqref{eq:ref-slave-Linfty-bound} hold on \([0,T]\), let the coupling matrix be diagonal, \(D=dI\), and choose
\begin{equation}
\label{eq:global-dissipativity-coupling}
d>\mathfrak q_*+\frac{|c_1|}{2}L_{\mathrm{ref}}.
\end{equation}
Then, there exists \(\mu>0\) such that, for every \(t\in[0,T]\) and every \(\Delta\in\mathbb C^{K+1}\),
\begin{equation}
\label{eq:global-dissipativity}
\operatorname{Re}\,
\bigl\langle \Delta,
{\sf f}(t,e_{\mathrm{ref}}(t)+\Delta)-{\sf f}(t,e_{\mathrm{ref}}(t))
\bigr\rangle_{K;L^2}
\le
-\mu\|\Delta\|_{K;L^2}^2.
\end{equation}
In particular, it is sufficient to choose
\[
0 < \mu \le d-\mathfrak q_*-\frac{|c_1|}{2}L_{\mathrm{ref}}.
\]
\end{lemma}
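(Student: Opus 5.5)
The plan is to transfer the inner product in \eqref{eq:global-dissipativity} to physical space through Parseval's relation \eqref{eq:parseval-induced-norm}. There the linear part becomes a diagonal multiplier estimate, and the Burgers-type part becomes an energy identity that uses integration by parts on the periodic cell.

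\emph{Step 1: rewrite the drift difference.} From the definition of ${\sf f}$ in \eqref{eq:error-ode-general}, the terms $\eta(\bar a(t))$ cancel in the difference. Write $b_{\mathrm{ref}}:=\bar a-e_{\mathrm{ref}}$ for the Fourier vector of $v_{\mathrm{ref}}$. Then
\[
{\sf f}(t,e_{\mathrm{ref}}+\Delta)-{\sf f}(t,e_{\mathrm{ref}})
=(\Lambda(P,\omega_0)-dI)\Delta+\frac{ic_1\omega_0}{2}\bigl[\eta(b_{\mathrm{ref}}-\Delta)-\eta(b_{\mathrm{ref}})\bigr].
\]
Now use \eqref{eq:convolution}, with the Hermitian convention for negative modes built into \eqref{eq:eta-general}. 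The vector $\frac{i\omega_0}{2}\eta(w)$ is the one-sided coefficient vector of ${\rm P}_K(ww_x)$, where $w=\mathcal F^{-1}$ of the given coefficients. Since $b_{\mathrm{ref}}$ and $\Delta$ correspond to real trigonometric polynomials $v=v_{\mathrm{ref}}(t,\cdot)$ and $h=h_\Delta$, the nonlinear difference is the coefficient vector of
\[
c_1{\rm P}_K\bigl[(v-h)(v-h)_x-vv_x\bigr]=c_1{\rm P}_K\bigl[hh_x-(vh)_x\bigr].
\]
I expect this identification to be the main point requiring care. One must check that restricting to $k=0,\ldots,K$ and recovering negative modes by conjugation is consistent with the full two-sided convolution; this follows because $v$ and $h$ are real, so both sides inherit Hermitian symmetry. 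One must also check that $\langle\cdot,\cdot\rangle_{K;L^2}$ equals the real $L^2(0,X)$ inner product of the associated real fields. This follows from the weights $X$ and $2X$ in \eqref{eq:fourier-L2-inner-product}, since the cross terms pair $k$ with $-k$.

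\emph{Step 2: linear part.} Since $\Lambda(P,\omega_0)$ is diagonal with entries satisfying $\operatorname{Re}[-P(i\omega_0k)]=\mathfrak q(\omega_0k)\le\mathfrak q_*$, we get
\[
\operatorname{Re}\langle\Delta,(\Lambda-dI)\Delta\rangle_{K;L^2}
=X\Bigl(\mathfrak q(0)|\Delta_0|^2+2\sum_{k=1}^K\mathfrak q(\omega_0k)|\Delta_k|^2\Bigr)-d\|\Delta\|_{K;L^2}^2
\le(\mathfrak q_*-d)\|\Delta\|_{K;L^2}^2 .
\]
The imaginary (odd-order, dispersive) parts of the symbol drop out under the real part.

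\emph{Step 3: nonlinear part.} Because $h$ lies in the range of ${\rm P}_K$, the projection can be dropped inside the inner product. The term $\int_0^X h^2h_x\,{\rm d}x=\frac13\int_0^X(h^3)_x\,{\rm d}x$ vanishes by periodicity. For the other term, integrating by parts gives
\[
\int_0^X h(vh)_x\,{\rm d}x=\int_0^X v_xh^2\,{\rm d}x+\int_0^X v\bigl(\tfrac{h^2}{2}\bigr)_x{\rm d}x=\frac12\int_0^X v_xh^2\,{\rm d}x .
\]
Hence the nonlinear contribution is bounded by $\frac{|c_1|}{2}\|\partial_xv_{\mathrm{ref}}\|_{L^\infty}\|h\|_{L^2}^2\le\frac{|c_1|}{2}L_{\mathrm{ref}}\|\Delta\|_{K;L^2}^2$, using \eqref{eq:L-ref-def} and Parseval.

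\emph{Step 4: conclude.} Adding the two bounds gives \eqref{eq:global-dissipativity} with $\mu=d-\mathfrak q_*-\frac{|c_1|}{2}L_{\mathrm{ref}}$, which is positive by \eqref{eq:global-dissipativity-coupling}. Any smaller positive $\mu$ also works. The estimate holds for every $\Delta$ because no smallness of $\Delta$ was used anywhere.
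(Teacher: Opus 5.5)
Your proposal is correct and follows the paper's proof essentially step for step. You pass to physical space via Parseval, bound the linear part using $\operatorname{Re}[-P(i\omega_0k)]\le\mathfrak q_*$, and use periodic integration by parts to reduce the Burgers-type term to $-\frac{c_1}{2}\int_0^X(\partial_x v_{\mathrm{ref}})h_\Delta^2\,{\rm d}x$, which gives $\mu=d-\mathfrak q_*-\frac{|c_1|}{2}L_{\mathrm{ref}}$. Keeping ${\rm P}_K$ explicit and removing it by self-adjointness (since $h_\Delta$ lies in its range) is slightly more precise than the paper's identity \eqref{eq:appendix-nonlinear-fourier-identity}, which omits the truncation; like the paper, you rely on $\Delta_0$ being real so that $h_\Delta$ is a real field.
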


\begin{proof}
Fix some arbitrary \(t\in[0,T]\) and \(\Delta\in\mathbb C^{K+1}\). The drift difference can be mapped into physical space to obtain (see Appendix \ref{ap-eq-ddrift-phys} for a detailed derivation)
\begin{equation}
\mathcal{F}^{-1}\left[
	{\sf f}(t,e_{\mathrm{ref}}(t)+\Delta)-{\sf f}(t,e_{\mathrm{ref}}(t))
\right] = -c_1\Bigl(v_{\mathrm{ref}}\partial_x h_\Delta+h_\Delta\,\partial_x v_{\mathrm{ref}}-h_\Delta\,\partial_x h_\Delta\Bigr)
-P(\partial_x)h_\Delta
-d h_\Delta.
\label{eq-ddrift-phys}
\end{equation}
Let
$
G(t,\Delta):={\sf f}(t,e_{\mathrm{ref}}(t)+\Delta)-{\sf f}(t,e_{\mathrm{ref}}(t))
$
and note that, from \eqref{eq:parseval-induced-norm}, one obtains
\begin{equation}
\langle \Delta,G(t,\Delta)\rangle_{K;L^2}
=
\left\langle \mathcal F^{-1}\Delta,\mathcal F^{-1}G(t,\Delta)\right\rangle_{L^2(0,X)}.
\label{eq-xx0}
\end{equation}
Since \(h_\Delta=\mathcal F^{-1}\Delta\), combining \eqref{eq-ddrift-phys} and \eqref{eq-xx0} yields
\begin{eqnarray}
\operatorname{Re}\,
\bigl\langle \Delta,
{\sf f}(t,e_{\mathrm{ref}}(t)+\Delta)-{\sf f}(t,e_{\mathrm{ref}}(t))
\bigr\rangle_{K;L^2} &=& 
-c_1\int_0^Xh_\Delta\Bigl(v_{\mathrm{ref}}\partial_x h_\Delta+h_\Delta\,\partial_x v_{\mathrm{ref}}-h_\Delta\,\partial_x h_\Delta\Bigr)\,{\rm d}x
\nonumber\\
&&
- \operatorname{Re}\,\langle h_\Delta,P(\partial_x)h_\Delta\rangle_{L^2(0,X)}
-d\|h_\Delta(t,\cdot)\|_{L^2(0,X)}^2,
\label{eq:global-dissipativity-proof-start}
\end{eqnarray}
where we have used that the coefficients are Hermitian-symmetric, hence \(h_\Delta\) and \(v_{\mathrm{ref}}\) are real-valued, and \(c_1\in\mathbb R\). The real part is kept explicitly in the term containing \(P(\partial_x)\), since only the symmetric part of the linear operator contributes to the energy estimate.
Since the boundary conditions are periodic,
\[
\int_0^Xh_\Delta^2\partial_x h_\Delta\,{\rm d}x
=
\frac13\int_0^X\partial_x(h_\Delta^3)\,{\rm d}x=0,
\]
and integrating by parts one arrives at
\[
\int_0^Xh_\Delta\,v_{\mathrm{ref}}\partial_x h_\Delta\,{\rm d}x
=
-\frac12\int_0^X(\partial_xv_{\mathrm{ref}})h_\Delta^2\,{\rm d}x.
\]
Combining the expressions above we obtain
\[
\int_0^Xh_\Delta\Bigl(v_{\mathrm{ref}}\partial_x h_\Delta+h_\Delta\,\partial_x v_{\mathrm{ref}}-h_\Delta\,\partial_x h_\Delta\Bigr)\,{\rm d}x
=
\frac12\int_0^X(\partial_xv_{\mathrm{ref}})h_\Delta^2\,{\rm d}x
\]
and, hence,
\begin{equation}
\label{eq:global-nonlinear-bound-fourier}
\left|
 c_1\int_0^Xh_\Delta\Bigl(v_{\mathrm{ref}}\partial_x h_\Delta+h_\Delta\,\partial_x v_{\mathrm{ref}}-h_\Delta\,\partial_x h_\Delta\Bigr)\,{\rm d}x
\right|
\le
\frac{|c_1|}{2}\|\partial_xv_{\mathrm{ref}}(t,\cdot)\|_{L^\infty(0,X)}\|h_\Delta(t,\cdot)\|_{L^2(0,X)}^2.
\end{equation}
For the linear part, Parseval's identity yields
\begin{equation}
-\operatorname{Re}\,\langle h_\Delta,P(\partial_x)h_\Delta\rangle_{L^2(0,X)}
=
X\sum_{k=-K}^{K}\operatorname{Re}\bigl[-P(i\omega_0 k)\bigr]|\Delta_k|^2
\le
\mathfrak q_*\|h_\Delta(t,\cdot)\|_{L^2(0,X)}^2,
\label{eq:global-linear-bound-fourier}
\end{equation}
where the inequality follows from \eqref{eq:qstar-bounded}. Combining \eqref{eq:global-dissipativity-proof-start}, \eqref{eq:global-nonlinear-bound-fourier} and \eqref{eq:global-linear-bound-fourier}, and using \eqref{eq:L-ref-def}, we obtain
\[
\operatorname{Re}\,
\bigl\langle \Delta,
{\sf f}(t,e_{\mathrm{ref}}(t)+\Delta)-{\sf f}(t,e_{\mathrm{ref}}(t))
\bigr\rangle_{K;L^2}
\le
-\left(d-\mathfrak q_*-\frac{|c_1|}{2}L_{\mathrm{ref}}\right)
\|\Delta\|_{K;L^2}^2.
\]
\end{proof}

For a matrix \(A\) mapping a real Euclidean noise vector into the retained Fourier coefficients, we use the induced Hilbert--Schmidt norm
\begin{equation}
\label{eq:induced-HS-norm}
\|A\|_{K;L^2,F}^2
:=
\sum_{j}\|A \beta_j\|_{K;L^2}^2,
\end{equation}
where \(\{\beta_j\}\) denotes the canonical basis of the domain. In particular, the It\^o correction associated with the noise coefficient \(R_K\) in the Lyapunov function \(\|\Delta\|_{K;L^2}^2\) is \(\sigma^2d^2\|R_K\|_{K;L^2,F}^2\).

Lemma~\ref{lem:global-dissipativity-fourier} provides sufficient conditions for the one-sided global dissipativity condition \eqref{eq:global-dissipativity} to hold. This condition, in turn, enables us to compute mean-square bounds for the deviation process $\Delta(t)$ that hold over $[0,T]$ without any assumptions on exit times. Moreover, if the regularity assumptions on which we have constructed hold uniformly over time, then we can also find a time-uniform bound on $\mathbb{E}\|\Delta(t)\|_{K;L^2}^2$. 

%
\begin{proposition}
\label{prop:global-stochastic-closeness}
Let Assumptions~\ref{ass:exact-reconstruction}--\ref{ass:reference-error-path} hold and assume that \eqref{eq:qstar-bounded} and \eqref{eq:ref-slave-Linfty-bound} are satisfied. Choose \(d>0\) to ensure that \eqref{eq:global-dissipativity-coupling} holds, and let \(\mu>0\) be the dissipativity constant in Lemma~\ref{lem:global-dissipativity-fourier}. Then, there exists a constant \(C_T<\infty\), depending only on \(T\), such that
\begin{equation}
\label{eq:global-sup-bound}
\mathbb E\Big[\sup_{0\le s\le T}\|\Delta(s)\|_{K;L^2}^2\Big]
\le
C_T\sigma^2 d^2\|R_K\|_{K;L^2,F}^2.
\end{equation}
Consequently, for every \(\ell>0\),
\begin{equation}
\label{eq:global-prob-bound}
\mathbb P\Big(\sup_{0\le s\le T}\|\Delta(s)\|_{K;L^2}>\ell\Big)
\le
\frac{C_T}{\ell^2}
\sigma^2 d^2\|R_K\|_{K;L^2,F}^2.
\end{equation}
Moreover, if Assumptions~\ref{ass:exact-reconstruction}--\ref{ass:reference-error-path}, and inequalities \eqref{eq:qstar-bounded}, \eqref{eq:ref-slave-Linfty-bound}, and \eqref{eq:global-dissipativity} hold uniformly for all \(t\ge 0\) (as $T \to \infty$), then
\begin{equation}
\label{eq:global-limsup}
\sup_{t \ge 0} \mathbb E\|\Delta(t)\|_{K;L^2}^2
\le
\frac{\sigma^2 d^2\|R_K\|_{K;L^2,F}^2}{2\mu}.
\end{equation}
\end{proposition}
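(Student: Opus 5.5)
The plan is to repeat the Lyapunov argument of Proposition~\ref{prop:stochastic-closeness}, now with $V(\Delta):=\|\Delta\|_{K;L^2}^2$ and with the global estimate of Lemma~\ref{lem:global-dissipativity-fourier} in place of the local linearization. Since the drift bound holds for every $\Delta\in\mathbb C^{K+1}$, the stopping time $\tau_\rho$ is no longer needed. Apply It\^o's formula to \eqref{eq:Delta-SDE}. The inner product $\langle\cdot,\cdot\rangle_{K;L^2}$ is a weighted Euclidean one with weight matrix $W=X\operatorname{diag}(1,2,\dots,2)$, so $V(\Delta)=\Delta^H W\Delta$. This gives
\[
{\rm d}V=2\operatorname{Re}\langle\Delta,G(t,\Delta)\rangle_{K;L^2}{\rm d}t+\sigma^2d^2\|R_K\|_{K;L^2,F}^2{\rm d}t-2\sigma d\operatorname{Re}\langle\Delta,R_K{\rm d}W_N\rangle_{K;L^2},
\]
where the It\^o correction is exactly the induced Hilbert--Schmidt norm \eqref{eq:induced-HS-norm}. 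Lemma~\ref{lem:global-dissipativity-fourier} then yields
\[
{\rm d}V\le(-2\mu V+\sigma^2d^2\|R_K\|_{K;L^2,F}^2){\rm d}t+{\rm d}M_t .
\]

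For the finite-horizon bound \eqref{eq:global-sup-bound}, the only new point is that the martingale is no longer a priori square integrable, because $\Delta$ is no longer confined to $B_\rho$. I would localize with $\theta_n:=\inf\{t:\|\Delta(t)\|_{K;L^2}\ge n\}\wedge T$. The Lipschitz-on-bounded-sets drift gives a unique strong solution up to explosion, and the one-sided bound prevents explosion, so $\theta_n\uparrow T$ a.s. With this localization in place, the argument of Proposition~\ref{prop:stochastic-closeness} goes through: drop the $-2\mu V$ term, and write $M_t=-2\sigma d\int g^\top{\rm d}W_N$ with $g=\operatorname{Re}\{R_K^HW\Delta\}$. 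Then $\|g\|^2\le \|R_K\|_{K;L^2,F}^2\|\Delta\|_{K;L^2}^2$, which follows from Cauchy--Schwarz in the weighted inner product since $g_j=\operatorname{Re}\langle R_K\beta_j,\Delta\rangle_{K;L^2}$. Applying Burkholder--Davis--Gundy (or Doob) with Young's inequality, we get for $Y_n(t)=\mathbb E\sup_{s\le t}V(\Delta(s\wedge\theta_n))$ the inequality
\[
Y_n(t)\le\sigma^2d^2\|R_K\|_{K;L^2,F}^2(T+8)+\tfrac12\int_0^tY_n,
\]
so Gronwall's inequality gives $Y_n(T)\le C_T\sigma^2d^2\|R_K\|_{K;L^2,F}^2$ with $C_T=(T+8)e^{T/2}$. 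Fatou's lemma as $n\to\infty$ then yields \eqref{eq:global-sup-bound}, and Markov's inequality yields \eqref{eq:global-prob-bound}.

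For the time-uniform bound \eqref{eq:global-limsup}, apply It\^o's formula to $e^{2\mu t}V(\Delta(t\wedge\theta_n))$ and take expectations; the stopped martingale has zero mean. This gives
\[
\mathbb E[e^{2\mu(t\wedge\theta_n)}V(\Delta(t\wedge\theta_n))]\le\sigma^2d^2\|R_K\|_{K;L^2,F}^2\,\mathbb E\int_0^{t\wedge\theta_n}e^{2\mu s}{\rm d}s .
\]
Letting $n\to\infty$ by Fatou's lemma, and using $\Delta(0)=0$, we obtain
\[
\mathbb E V(\Delta(t))\le\frac{\sigma^2d^2\|R_K\|_{K;L^2,F}^2}{2\mu}(1-e^{-2\mu t}).
\]
Because the hypotheses hold uniformly in $t\ge0$, $\mu$ does not depend on $T$, and \eqref{eq:global-limsup} follows.

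The step I expect to be the main obstacle is the rigorous justification of nonexplosion and of the localization (ensuring $\theta_n\to T$). The plan is to read nonexplosion off the same drift bound applied to $V$, which gives $\mathbb E V(\Delta(t\wedge\theta_n))\le C t$ uniformly in $n$, and hence $n^2\,\mathbb P(\theta_n<t)\le Ct$. A secondary point is checking the It\^o-correction identity in the weighted norm. The noise is real while the state is complex with the one-sided storage convention, so the correction is $\sum_j\|R_K\beta_j\|_{K;L^2}^2$, precisely as stated after \eqref{eq:induced-HS-norm}.
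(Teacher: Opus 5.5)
Your proposal is correct and follows the paper's proof essentially step for step. Both use It\^o's formula for $\|\Delta\|_{K;L^2}^2$ and Lemma~\ref{lem:global-dissipativity-fourier}, drop the $-2\mu V$ term, and then apply Doob's inequality with It\^o's isometry, Young and Gronwall, which gives the same $C_T=(T+8){\rm e}^{T/2}$; Markov's inequality gives the tail bound, and an inhomogeneous Gronwall argument gives the time-uniform estimate. Your localization by $\theta_n$ adds rigor the paper lacks: with the nonexplosion bound $n^2\,\mathbb P(\theta_n<t)\le Ct$ and the limit $n\to\infty$ via Fatou, it justifies a step the paper only asserts. The paper claims that $M_t$ is a square-integrable, zero-mean martingale ``without stopping'', which is not automatic once $\Delta$ is no longer confined to $B_\rho$.
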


\begin{proof}
We choose a Lyapunov function $V(\Delta):=\|\Delta\|_{K;L^2}^2$. Since \(\Delta(0)=0\), It\^o's formula and Lemma~\ref{lem:global-dissipativity-fourier} yield
\begin{align}
{\rm d}V(\Delta(t))
&=
2\operatorname{Re}\,
\bigl\langle \Delta(t),
{\sf f}(t,e_{\mathrm{ref}}(t)+\Delta(t))-{\sf f}(t,e_{\mathrm{ref}}(t))
\bigr\rangle_{K;L^2}\,{\rm d}t
\nonumber\\
&\quad
+\sigma^2 d^2\|R_K\|_{K;L^2,F}^2\,{\rm d}t
-2\sigma d\,\operatorname{Re}\langle \Delta(t),R_K\,{\rm d}W_N(t)\rangle_{K;L^2}
\nonumber\\
&\le
\Bigl(-2\mu\|\Delta(t)\|_{K;L^2}^2+
\sigma^2 d^2\|R_K\|_{K;L^2,F}^2\Bigr)\,{\rm d}t
-2\sigma d\,\operatorname{Re}\langle \Delta(t),R_K\,{\rm d}W_N(t)\rangle_{K;L^2}.
\label{eq:global-Ito}
\end{align}
Dropping the negative drift term and integrating from $0$ to $t\le T$, we obtain
\[
\|\Delta(t)\|_{K;L^2}^2
\le
\sigma^2 d^2\|R_K\|_{K;L^2,F}^2t+M_t,
\]
where
\[
M_t:=-2\sigma d\int_0^t
\operatorname{Re}\langle \Delta(s),R_K\,{\rm d}W_N(s)\rangle_{K;L^2}.
\]
The same argument used in Proposition~\ref{prop:stochastic-closeness}, now without stopping, shows that \(M_t\) is a square-integrable martingale. 
Therefore, by Doob's \(L^2\) maximal inequality and It\^o's isometry,
\begin{align*}
\mathbb E\left[\sup_{0\le s\le t}|M_s|\right]
&\le
2\sqrt{\mathbb E|M_t|^2}
\le 4\sigma d\|R_K\|_{K;L^2,F}
\sqrt{\int_0^t\mathbb E\|\Delta(s)\|_{K;L^2}^2\,{\rm d}s}.
\end{align*}
Now, if we define
\[
Y(t):=\mathbb E\left[\sup_{0\le s\le t}\|\Delta(s)\|_{K;L^2}^2\right],
\]
and use the fact that \(\mathbb E\|\Delta(s)\|_{K;L^2}^2\le Y(s)\), then we arrive at
\[
Y(t)
\le
\sigma^2d^2\|R_K\|_{K;L^2,F}^2T
+
4\sigma d\|R_K\|_{K;L^2,F}
\sqrt{\int_0^tY(s)\,{\rm d}s}
\]
and a simple application of Young's inequality yields
\[
Y(t)
\le
\sigma^2d^2\|R_K\|_{K;L^2,F}^2(T+8)
+
\frac12\int_0^tY(s)\,{\rm d}s.
\]
Gronwall's inequality (in integral form) then yields \eqref{eq:global-sup-bound}, with constant $C_T=(T+8){\rm e}^{\frac{1}{2}T}$. The tail probability in \eqref{eq:global-prob-bound} follows immediately from Markov's inequality. 

For the long-time estimate, take expectations directly in \eqref{eq:global-Ito}. Since the martingale term has zero mean, we readily obtain 
\[
\frac{{\rm d}}{{\rm d}t}\mathbb E\|\Delta(t)\|_{K;L^2}^2
\le
-2\mu\,\mathbb E\|\Delta(t)\|_{K;L^2}^2+
\sigma^2 d^2\|R_K\|_{K;L^2,F}^2
\]
and, recalling \(\Delta(0)=0\), a standard inhomogeneous Gronwall argument yields
\begin{equation}
\mathbb E\|\Delta(t)\|_{K;L^2}^2
\le
\frac{\sigma^2 d^2\|R_K\|_{K;L^2,F}^2}{2\mu}\bigl(1-{\rm e}^{-2\mu t}\bigr).
\label{eq-xx1}
\end{equation}
Taking $\sup_{t \ge 0}$ on both sides of \eqref{eq-xx1} yields \eqref{eq:global-limsup} and concludes the proof.
\end{proof}

%
\subsection{Error bounds in physical space}
\label{sec:physical-error}

The purpose of this subsection is to use the global dissipativity result of Lemma~\ref{lem:global-dissipativity-fourier} and the mean-square estimate of Proposition~\ref{prop:global-stochastic-closeness} to extend the synchronization-error bounds, obtained in Fourier space, to physical space. We first transfer the Fourier-space estimate to the truncated physical fields through Parseval's identity and then account for the additional approximation errors that arise when the master is the solution of the full, infinite-dimensional PDE.

Accordingly, we distinguish two physical-space synchronization error fields. For a truncation order $K$, let
\begin{equation}
\label{eq:physical-error-truncated}
\mathcal E_K(t,x):=u_K(t,x)-v_K(t,x)=\mathcal F^{-1}e(t)(x)
\end{equation}
denote the physical-space error between the truncated master field $u_K$ and the corresponding truncated slave field $v_K$. We also define
\begin{equation}
\label{eq:physical-error-full}
\mathcal E_\infty^{(K)}(t,x):=u(t,x)-v_K(t,x),
\end{equation}
which is the physical-space error between the infinite-dimensional master field $u$, namely the solution of the full PDE, and the truncated slave field $v_K$.
The field $\mathcal E_\infty^{(K)}$ depends on the truncation order through $v_K$. If we denote the synchronization error field for the deterministic reference system as 
\[
\mathcal E_{{\rm ref},K}(t,x):=\mathcal F^{-1}e_{\rm ref}(t)(x),
\]
then the relationship $e=e_{\rm ref}+\Delta$, combined with Parseval's identity, yields
\begin{equation}
\label{eq:physical-error-truncated-decomposition}
\mathcal E_K(t,\cdot)=\mathcal E_{{\rm ref},K}(t,\cdot)+h_\Delta(t,\cdot),
\quad \text{with} \quad
\|h_\Delta(t,\cdot)\|_{L^2(0,X)}^2
=\|\Delta(t)\|_{K;L^2}^2.
\end{equation}
Expression \eqref{eq:physical-error-truncated-decomposition} shows that Propositions~\ref{prop:stochastic-closeness} and \ref{prop:global-stochastic-closeness} control the stochastic part of the physical synchronization error exactly. 

The analysis of the infinite-dimensional error field $\mathcal E_\infty^{(K)}$ requires a quantification of the spectral truncation error. Let ${\rm P}_K$ be the order-$K$ Fourier projection introduced in Section~\ref{sPDEFamily}, and define
\begin{equation}
\label{eq:spectral-error-components}
\zeta_K:=u - {\rm P}_Ku = (I-{\rm P}_K)u
\quad \text{and} \quad
\gamma_K:={\rm P}_Ku-u_K,
\end{equation}
where $I$ denotes the identity operator. The infinite-dimensional field $\zeta_K(t,x)$ is the unresolved tail after projection: it contains only Fourier modes with $|k|>K$ and measures the error of the best $L^2$-projection of $u(t,x)$ onto the retained Fourier subspace. The field $\gamma_K$ is finite-dimensional and quantifies the difference between the projected full solution and the solution of the closed Fourier--Galerkin system. In general, $\gamma_K\ne0$ because ${\rm P}_K(uu_x)$ is not determined only by the retained coefficients of $u$. Combining \eqref{eq:physical-error-truncated-decomposition} and \eqref{eq:spectral-error-components} yields the four-term decomposition
\begin{equation}
\label{eq:physical-error-full-decomposition}
\mathcal E_\infty^{(K)}(t,\cdot)
=\zeta_K(t,\cdot)+\gamma_K(t,\cdot)+\mathcal E_{{\rm ref},K}(t,\cdot)+h_\Delta(t,\cdot).
\end{equation}

For a test function $w(x) \in H_{\rm per}^s(0,X)$ with Fourier coefficients $\{\widehat w_k\}_{k\in\mathbb Z}$, recall the periodic Sobolev norm
\begin{equation}
\label{eq:periodic-Sobolev-norm}
\|w\|_{H^s_{\rm per}(0,X)}^2
:=X\sum_{k\in\mathbb Z}(1+\omega_0^2k^2)^s|\widehat w_k|^2
\end{equation}
and set $\kappa_K:=\sqrt{1+\omega_0^2(K+1)^2}$. Parseval's identity immediately yields
\begin{equation}
\label{eq:projection-tail-Hs}
\|\zeta_K(t,\cdot)\|_{L^2(0,X)}
\le \kappa_K^{-s}\|u(t,\cdot)\|_{H^s_{\rm per}(0,X)}
\quad \text{and} \quad
\|\partial_x\zeta_K(t,\cdot)\|_{L^2(0,X)}
\le \kappa_K^{-(s-1)}\|u(t,\cdot)\|_{H^s_{\rm per}(0,X)}.
\end{equation}

We now collect the hypotheses needed to obtain bounds whose constants do not deteriorate as the truncation order $K$ increases.

\begin{assumption}[Uniform spectral approximation]
\label{ass:uniform-physical-error}
Fix $K_0\ge1$ and $s>\frac{3}{2}$. The following conditions hold.

\begin{enumerate}[label=\textup{(\alph*)},ref=\theassumption(\alph*),leftmargin=.25in]

\item \label{ass:uniform-physical-error-regularity}
The bound $\mathfrak{q}_*$ \eqref{eq:qstar-bounded} is satisfied, $u_K(0,\cdot)={\rm P}_Ku(0,\cdot)$ for every $K\ge K_0$, and there is a constant $M_{s,T}<\infty$, independent of $K$, such that
\begin{equation}
\label{eq:uniform-Sobolev-assumption}
\sup_{0\le t\le T}\|u(t,\cdot)\|_{H^s_{\rm per}(0,X)}
+\sup_{K\ge K_0}\sup_{0\le t\le T}
\left(\|u_K(t,\cdot)\|_{H^s_{\rm per}(0,X)}+\|v_{{\rm ref},K}(t,\cdot)\|_{H^s_{\rm per}(0,X)}\right)
\le M_{s,T}.
\end{equation}

\item \label{ass:uniform-physical-error-reference}
For every $K\ge K_0$, the deterministic reference error $e_{{\rm ref},K}$ solves the ODE \eqref{eq:det-ref-error} and the stochastic and deterministic slaves have the same initial state, so that $e_K(0)=e_{{\rm ref},K}(0)$. Moreover,
\begin{equation}
\label{eq:uniform-reference-initial-error}
E_0^2:=\sup_{K\ge K_0}\|e_{{\rm ref},K}(0)\|_{K;L^2}^2<\infty.
\end{equation}

\item \label{ass:uniform-physical-error-gain}
The coupling is scalar, $D=dI$, where the same $d>0$ is used for every $K\ge K_0$. If $C_{\rm emb} = C_{\rm emb}(s,X) < \infty$ denotes the norm of the embedding $H^s_{\rm per}(0,X)\hookrightarrow W^{1,\infty}(0,X)$ and $\overline L:=C_{\rm emb}M_{s,T}$, then we assume that $d$ is large enough to ensure
\begin{equation}
\label{eq:uniform-gain-K}
\mu_0:=d-\mathfrak q_*-\frac{|c_1|}{2}\overline L>0.
\end{equation}

\item \label{ass:uniform-physical-error-grid}
For every $K\ge K_0$, the observations are collected on an equispaced grid $\mathcal X_{N_K}$ with $N_K\ge2K+1$ points, and the retained modes are reconstructed with the least-squares matrix $R_K$ defined in \eqref{eq:R-matrix}.
\end{enumerate}
\end{assumption}

The resolved-mode error is controlled by the following consequence of
Assumption~\ref{ass:uniform-physical-error-regularity}.

%
\begin{lemma}[Uniform Fourier--Galerkin error]
\label{lem:uniform-Galerkin-error}
If Assumption~\ref{ass:uniform-physical-error-regularity} holds, then there
exists a constant $G_{s,T}<\infty$, independent of $K$, such that, for every
$K\ge K_0$,
\begin{equation}
\label{eq:Galerkin-error-Hs}
\sup_{0\le t\le T}\|\gamma_K(t,\cdot)\|_{L^2(0,X)}^2
\le G_{s,T}\kappa_K^{-2(s-1)}.
\end{equation}
The constant $G_{s,T}$ may depend on $s$, $T$, $X$, the PDE coefficients,
and $M_{s,T}$, but not on the truncation order $K$. 
\end{lemma}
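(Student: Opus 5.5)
We will run a standard energy estimate for $\gamma_K={\rm P}_Ku-u_K$, which is a trigonometric polynomial of degree $K$ with $\gamma_K(0,\cdot)=0$ by Assumption~\ref{ass:uniform-physical-error-regularity}.

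\textbf{Step 1: the error equation.} Applying ${\rm P}_K$ to the full PDE \eqref{eq:intro-pde} and subtracting the Galerkin system \eqref{eq:master-truncated-ode}, written in physical form as $\partial_t u_K+c_1{\rm P}_K(u_K\partial_xu_K)+P(\partial_x)u_K=0$, gives
\[
\partial_t\gamma_K+P(\partial_x)\gamma_K
=-\frac{c_1}{2}\,{\rm P}_K\partial_x\bigl(u^2-u_K^2\bigr).
\]
Write $u=u_K+\gamma_K+\zeta_K$. Then $u^2-u_K^2=(u+u_K)(\gamma_K+\zeta_K)$.

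\textbf{Step 2: energy identity.} Take the $L^2(0,X)$ inner product with $\gamma_K$. Since ${\rm P}_K\gamma_K=\gamma_K$ and ${\rm P}_K$ is self-adjoint, the projection can be dropped. The linear term contributes at most $\mathfrak q_*\|\gamma_K\|^2$, exactly as in \eqref{eq:global-linear-bound-fourier}; note that $\mathfrak q_*<\infty$ is assumed. The nonlinear term equals $-\frac{c_1}{2}\int\gamma_K\,\partial_x\bigl[(u+u_K)(\gamma_K+\zeta_K)\bigr]\,{\rm d}x$, and we split it into two parts.
\begin{enumerate}
\item The $\gamma_K$ part is $\int\gamma_K\partial_x[(u+u_K)\gamma_K]=\frac12\int\partial_x(u+u_K)\,\gamma_K^2$, after integration by parts as in the proof of Lemma~\ref{lem:global-dissipativity-fourier}. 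It is bounded by $\frac12\|\partial_x(u+u_K)\|_{L^\infty}\|\gamma_K\|^2\le C_{\rm emb}M_{s,T}\|\gamma_K\|^2$. This uses the Sobolev embedding for $s>3/2$ and the uniform bound \eqref{eq:uniform-Sobolev-assumption}; it is essential here that we use this embedding rather than the $K$-dependent Bernstein bound.
\item The $\zeta_K$ part is $\int\gamma_K\partial_x[(u+u_K)\zeta_K]$. Expanding the derivative gives $\|\partial_x(u+u_K)\|_{L^\infty}\|\zeta_K\|\,\|\gamma_K\|+\|u+u_K\|_{L^\infty}\|\partial_x\zeta_K\|\,\|\gamma_K\|$. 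By \eqref{eq:projection-tail-Hs} this is at most $C M_{s,T}^2\kappa_K^{-(s-1)}\|\gamma_K\|$.
\end{enumerate}

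\textbf{Step 3: Gronwall.} Combining the bounds and applying Young's inequality yields
\[
\frac{{\rm d}}{{\rm d}t}\|\gamma_K\|^2\le\bigl(2\mathfrak q_*+|c_1|\overline L+1\bigr)\|\gamma_K\|^2+C\,M_{s,T}^4\kappa_K^{-2(s-1)}.
\]
With $\gamma_K(0)=0$, Gronwall on $[0,T]$ then gives \eqref{eq:Galerkin-error-Hs} with $G_{s,T}=C M_{s,T}^4\,T\,{\rm e}^{(2\mathfrak q_*+|c_1|\overline L+1)T}$, independent of $K$.

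\textbf{Main obstacle.} The delicate point is the derivative falling on $\zeta_K$. This is where the rate degrades from $\kappa_K^{-s}$ to $\kappa_K^{-(s-1)}$. We handle it by keeping $\partial_x$ on the product, rather than integrating by parts onto $\gamma_K$, because $\partial_x\gamma_K$ cannot be controlled uniformly in $K$ by the energy. We must also check that every $L^\infty$ and $W^{1,\infty}$ bound uses only the uniform $H^s$ hypothesis, and that the commutation of ${\rm P}_K$ with $\partial_x$ and $P(\partial_x)$ on Fourier modes is valid.
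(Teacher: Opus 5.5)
Your proposal is correct and follows essentially the same route as the paper's proof in Appendix~\ref{ap:Galerkin-error-Hs}. Both arguments take the $L^2$ energy identity for $\gamma_K$, remove the projector by self-adjointness, bound the linear part by $\mathfrak q_*$, treat the pure-$\gamma_K$ nonlinear contribution by integration by parts and the $\zeta_K$ contribution by H\"older with $\|\zeta_K\|$ and $\|\partial_x\zeta_K\|$ (the source of the $\kappa_K^{-(s-1)}$ rate), control all $L^\infty$ norms through $H^s_{\rm per}\hookrightarrow W^{1,\infty}$, and close with Young and Gronwall from $\gamma_K(0)=0$. The only difference is that you write the nonlinearity in conservative form $\tfrac12\partial_x[(u+u_K)(\gamma_K+\zeta_K)]$ instead of using the paper's four-term product-rule split, which is cosmetic and even gives a slightly smaller growth coefficient ($|c_1|\overline L$ versus the paper's $3|c_1|U_{s,T}$).
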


A proof is given in Appendix~\ref{ap:Galerkin-error-Hs}. The Fourier projection estimates used in the proof are standard; see Sec.~5.1.2 of Ref.~\cite{Canuto06}. The energy-error framework for linear evolution problems is developed in Sec.~6.5.1 of the same reference. Appendix~\ref{ap:Galerkin-error-Hs} simply provides an adaptation to the nonlinear PDE family considered here.

The conditions in Assumption~\ref{ass:uniform-physical-error} also consolidate and strengthen the hypotheses used earlier. In one spatial dimension, the Sobolev embedding and Assumption~\ref{ass:uniform-physical-error-regularity} bound $u_K(t,\cdot)$ in $L^\infty(0,X)$ and $v_{{\rm ref},K}(t,\cdot)$ in $W^{1,\infty}(0,X)$, uniformly over $K$ and $t\in[0,T]$. It therefore implies Assumption~\ref{ass:bounded-master}, as well as the boundedness of the reference error required in Assumption~\ref{ass:reference-error-path}. Assumption~\ref{ass:uniform-physical-error-reference} supplies the reference dynamics and common initialization in the latter assumption, while Assumption~\ref{ass:uniform-physical-error-gain} is Assumption~\ref{ass:scalar-coupling} with a gain that is uniform in $K$. Finally, Assumption~\ref{ass:uniform-physical-error-grid} guarantees exact reconstruction of the retained noiseless modes, as required by Assumption~\ref{ass:exact-reconstruction}. Together, Assumptions~\ref{ass:uniform-physical-error-regularity} and \ref{ass:uniform-physical-error-gain} also verify the hypotheses of Lemma~\ref{lem:global-dissipativity-fourier}, with a dissipativity constant $\mu_K\ge\mu_0$ for every $K\ge K_0$. Hence all the conditions needed to apply Proposition~\ref{prop:global-stochastic-closeness} hold uniformly in $K$.

Moreover, the orthogonality of the Fourier basis functions on the grids $\mathcal X_{N_K}$ in Assumption~\ref{ass:uniform-physical-error-grid} ensures that
\begin{equation}
\label{eq:uniform-reconstruction-bound}
\overline R^2:=\sup_{K\ge K_0}\|R_K\|_{K;L^2,F}^2
=\sup_{K\ge K_0}\frac{X(2K+1)}{N_K}\le X.
\end{equation}

For the high-resolution and long-time conclusion below, we impose one additional uniformity requirement: Assumption~\ref{ass:uniform-physical-error} must hold on $[0,\infty)$ with $M_{s,T}$ replaced by a constant $M_s$ independent of the time horizon. In particular, the constants $\overline L$ and $\mu_0$ in Assumption~\ref{ass:uniform-physical-error-gain} can then be chosen independently of time. No time-uniform version of Lemma~\ref{lem:uniform-Galerkin-error} is required. Indeed, for every fixed $t<\infty$, the lemma can be applied on any finite interval $[0,T]$ containing $t$; since $G_{s,T}<\infty$ is independent of $K$, its contribution vanishes when $K\to\infty$, even if $G_{s,T}$ grows with $T$.

%
\begin{proposition}
\label{prop:physical-error-uniform-K}
If Assumption~\ref{ass:uniform-physical-error} holds, then, for every $K\ge K_0$ and $t\in[0,T]$,
\begin{equation}
\label{eq:physical-error-uniform-K}
\mathbb E\|\mathcal E_\infty^{(K)}(t,\cdot)\|_{L^2(0,X)}^2
\le
4M_{s,T}^2\kappa_K^{-2s}
+4G_{s,T}\kappa_K^{-2(s-1)}
+4E_0^2{\rm e}^{-2\mu_0t}
+\frac{2\sigma^2d^2\overline R^2}{\mu_0}
\left(1-{\rm e}^{-2\mu_0t}\right).
\end{equation}
In particular,
\begin{equation}
\label{eq:physical-error-K-limit}
\limsup_{K\to\infty}
\mathbb E\|\mathcal E_\infty^{(K)}(t,\cdot)\|_{L^2(0,X)}^2
\le
4E_0^2{\rm e}^{-2\mu_0t}
+\frac{2\sigma^2d^2\overline R^2}{\mu_0}
\left(1-{\rm e}^{-2\mu_0t}\right).
\end{equation}
If, in addition, Assumption~\ref{ass:uniform-physical-error} holds on $[0,\infty)$ with $M_{s,T}$ replaced by a constant $M_s$ independent of the time horizon, then
\begin{equation}
\label{eq:physical-error-time-K-limit}
\limsup_{t\to\infty} \left(
	\limsup_{K\to\infty} \mathbb E\|\mathcal E_\infty^{(K)}(t,\cdot)\|_{L^2(0,X)}^2
\right)
\le
\frac{2\sigma^2d^2\overline R^2}{\mu_0}.
\end{equation}
\end{proposition}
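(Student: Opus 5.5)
The proof starts from the four-term decomposition \eqref{eq:physical-error-full-decomposition}. The elementary inequality $(x_1+x_2+x_3+x_4)^2\le 4\sum_i x_i^2$ then gives, pathwise,
\[
\|\mathcal E_\infty^{(K)}(t,\cdot)\|_{L^2(0,X)}^2\le 4\|\zeta_K\|^2+4\|\gamma_K\|^2+4\|\mathcal E_{{\rm ref},K}\|^2+4\|h_\Delta\|^2 ,
\]
where all norms are in $L^2(0,X)$ at time $t$. The first two terms are deterministic. By \eqref{eq:projection-tail-Hs} and \eqref{eq:uniform-Sobolev-assumption}, the tail term is at most $4M_{s,T}^2\kappa_K^{-2s}$. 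By Lemma~\ref{lem:uniform-Galerkin-error}, the Galerkin term is at most $4G_{s,T}\kappa_K^{-2(s-1)}$. Taking expectations, it remains to bound the reference term and $4\mathbb E\|h_\Delta\|^2=4\mathbb E\|\Delta(t)\|_{K;L^2}^2$.

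For the stochastic term I use Proposition~\ref{prop:global-stochastic-closeness}. Its hypotheses hold for every $K\ge K_0$ with $\mu_K\ge\mu_0$, as argued after Lemma~\ref{lem:uniform-Galerkin-error}. Concretely, the Sobolev embedding gives $L_{\mathrm{ref}}\le C_{\rm emb}M_{s,T}=\overline L$, so Lemma~\ref{lem:global-dissipativity-fourier} applies with $\mu=\mu_0$. The finite-horizon form of \eqref{eq-xx1} on $[0,T]$ then yields
\[
4\,\mathbb E\|\Delta(t)\|_{K;L^2}^2\le \frac{4\sigma^2d^2\|R_K\|^2_{K;L^2,F}}{2\mu_0}\left(1-{\rm e}^{-2\mu_0t}\right),
\]
because the differential inequality only uses \eqref{eq:global-dissipativity} on $[0,t]$. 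Using \eqref{eq:uniform-reconstruction-bound} to replace $\|R_K\|_{K;L^2,F}^2$ by $\overline R^2$ gives the last term of \eqref{eq:physical-error-uniform-K}.

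The reference term needs its own deterministic estimate. It is the deterministic ($\sigma=0$) analogue of Lemma~\ref{lem:global-dissipativity-fourier}, applied to $e_{{\rm ref},K}$ itself, that is, comparing against the zero-error solution. Since $e\equiv 0$ solves the error ODE, I write $e_{{\rm ref},K}=0+e_{{\rm ref},K}$. The same physical-space computation as in the proof of Lemma~\ref{lem:global-dissipativity-fourier} then applies with $v_{\mathrm{ref}}$ replaced by the master field $u_K$. This gives
\[
\frac{{\rm d}}{{\rm d}t}\|e_{{\rm ref},K}\|_{K;L^2}^2\le -2\Bigl(d-\mathfrak q_*-\tfrac{|c_1|}{2}\|\partial_xu_K\|_{L^\infty}\Bigr)\|e_{{\rm ref},K}\|_{K;L^2}^2 .
\]
By \eqref{eq:uniform-Sobolev-assumption} and the embedding, $\|\partial_x u_K\|_{L^\infty}\le\overline L$. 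Hence $\|\mathcal E_{{\rm ref},K}(t)\|^2=\|e_{{\rm ref},K}(t)\|_{K;L^2}^2\le E_0^2{\rm e}^{-2\mu_0t}$ by Gronwall and \eqref{eq:uniform-reference-initial-error}, which gives the third term. This step is the main point to check. It requires redoing the nonlinear identity \eqref{eq-ddrift-phys} with base point $e=0$, where the base slave field is $u_K$, whose $W^{1,\infty}$ bound is also covered by Assumption~\ref{ass:uniform-physical-error-regularity}.

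Finally, \eqref{eq:physical-error-K-limit} follows since $\kappa_K\to\infty$, $s>3/2>1$, and $M_{s,T},G_{s,T}$ are independent of $K$. For \eqref{eq:physical-error-time-K-limit}, fix $t$ and apply the bound on $[0,T]$ with $T\ge t$. Under the time-uniform version, $\mu_0$ and $\overline L$ are independent of $T$, so the $K$-limit bound holds with $T$-independent constants, while the $T$-dependent Galerkin constant disappears in the $K$-limit. Letting $t\to\infty$ then removes the $E_0^2{\rm e}^{-2\mu_0 t}$ term and leaves $2\sigma^2d^2\overline R^2/\mu_0$.
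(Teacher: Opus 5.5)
Your proof is correct and follows the paper's argument: the four-term decomposition with $\|\sum_{j=1}^4 z_j\|^2\le 4\sum_j\|z_j\|^2$, the Sobolev tail bound, Lemma~\ref{lem:uniform-Galerkin-error}, the finite-horizon form of \eqref{eq-xx1} with $\mu_0$ and $\overline R$, and the limits taken first in $K\to\infty$ and then in $t\to\infty$. The only deviation is the reference term, and the step you flagged is sound. The paper does not redo any computation there: it sets $\Delta=-e_{{\rm ref},K}$ in \eqref{eq:global-dissipativity}, with base field $v_{{\rm ref},K}$, and uses ${\sf f}(t,0)=0$. Your base point $e=0$, with base field $u_K$, yields the same quadratic form because $\int_0^X \mathcal E_{{\rm ref},K}^2\,\partial_x\mathcal E_{{\rm ref},K}\,{\rm d}x=0$, and both $W^{1,\infty}$ norms are bounded by $\overline L$.
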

%
\begin{proof}
Assumptions~\ref{ass:uniform-physical-error-regularity} and \ref{ass:uniform-physical-error-gain}, together with the Sobolev embedding, imply
\[
\sup_{K\ge K_0}\sup_{0\le t\le T}
\|\partial_xv_{{\rm ref},K}(t,\cdot)\|_{L^\infty(0,X)}
\le\overline L.
\]
Hence Lemma~\ref{lem:global-dissipativity-fourier} holds for every $K\ge K_0$ with a dissipativity constant $\mu_K\ge\mu_0$. Taking $\Delta=-e_{{\rm ref},K}$ in \eqref{eq:global-dissipativity}, and recalling that ${\sf f}(t,0)=0$, gives
\[
\|e_{{\rm ref},K}(t)\|_{K;L^2}^2
\le {\rm e}^{-2\mu_0t}\|e_{{\rm ref},K}(0)\|_{K;L^2}^2.
\]
Assumption~\ref{ass:uniform-physical-error-reference} gives the common initialization $\Delta_K(0)=0$. Hence the pointwise estimate \eqref{eq-xx1} in the proof of Proposition~\ref{prop:global-stochastic-closeness}, together with \eqref{eq:uniform-reconstruction-bound}, yields
\[
\mathbb E\|\Delta_K(t)\|_{K;L^2}^2
\le
\frac{\sigma^2d^2\overline R^2}{2\mu_0}
\left(1-{\rm e}^{-2\mu_0t}\right).
\]
Finally, apply the inequality $\| \sum_{j=1}^4 z_j \|^2\le4\sum_{j=1}^4\|z_j\|^2$ to \eqref{eq:physical-error-full-decomposition}, use \eqref{eq:projection-tail-Hs}, \eqref{eq:Galerkin-error-Hs}, and Parseval's identity, and substitute the two estimates above. This proves \eqref{eq:physical-error-uniform-K}. For any fixed $t<\infty$, choose $T\ge t$. The constants $M_{s,T}$ and $G_{s,T}$ are then finite and independent of $K$, so the projection and Galerkin terms vanish as $K\to\infty$, which yields \eqref{eq:physical-error-K-limit}. If Assumption~\ref{ass:uniform-physical-error} holds uniformly on $[0,\infty)$ as stated above, then $\mu_0$ is independent of $t$; taking $\limsup_{t\to\infty}$ in \eqref{eq:physical-error-K-limit} proves \eqref{eq:physical-error-time-K-limit}. Notice that this argument does not require $G_{s,T}$ to be bounded uniformly in $T$ because the limit $K\to\infty$ is taken first.
\end{proof}

The smoothness hypothesis in Proposition~\ref{prop:physical-error-uniform-K} is Sobolev rather than analytic. The threshold $s>\frac{3}{2}$ is used to control first spatial derivatives uniformly through the embedding $H^s_{\rm per}(0,X)\hookrightarrow W^{1,\infty}(0,X)$. The Fourier-tail estimate itself only requires $s>0$. For viscous Burgers and for members whose highest even-order term is dissipative, including the standard dissipative Kuramoto--Sivashinsky, Benney--Lin, and Nikolaevskiy equations, parabolic regularization and energy estimates make such bounds natural for smooth data and, often, yield stronger Gevrey regularity for positive times \cite{BaeBiswas2015}. For purely dispersive members, such as the Kawahara equation without dissipative terms, regularity is propagated rather than generated. The finite-horizon assumption is therefore justified by sufficiently smooth initial data and the corresponding well-posedness theory \cite{BiagioniLinares1997}. The inequality \eqref{eq:physical-error-time-K-limit} is stronger: it requires time-uniform Sobolev bounds for the full and Galerkin master solutions and the reference slaves, as ensured, for example, by an absorbing set or suitable global a priori estimates. It does not, however, require the Galerkin constant $G_{s,T}$ in \eqref{eq:Galerkin-error-Hs} to be uniform in $T$, because the high-resolution limit is taken before the long-time limit. The time-uniform Sobolev bounds themselves cannot be inferred solely from finite-time smoothness for every polynomial $P$ in \eqref{eq:operator-polynomial}.

%
\section{Numerical results}
\label{sec:numerical-results}

We illustrate numerically the two main contributions to the synchronization error in the physical domain identified in Section~\ref{sec:physical-error}: the mean-square error induced by observation noise scales as $\mathcal O(\sigma^2)$ and, when the slave has a lower spectral resolution than the master, a nonzero error floor due to the spectral truncation of the slave. This error floor decreases as the slave truncation order is increased.

%
\subsection{Numerical setup}

For the computer experiments, we consider the canonical Kawahara equation
\begin{equation}
u_t+u u_x+u_{xxx}-u_{xxxxx}=0
\label{eq:numerical-kawahara}
\end{equation}
and the Nikolaevskiy equation
\begin{equation}
u_t+u u_x+q u_{xx}-u_{xxxx}-u_{xxxxxx}=0,
\qquad q=0.8.
\label{eq:numerical-nikolaevskiy}
\end{equation}
In the notation of Section~\ref{sPDEFamily}, their coefficient vectors are, respectively,
\[
(c_0, c_1, c_2, \ldots,c_6)=(0,1,0,1,0,-1,0) 
\quad \text{and} \quad 
(c_0, c_1, c_2, \ldots,c_6)=(0,1,0.8,0,-1,0,-1).
\]
The two models are simulated on the periodic interval $[0,X]$, with $X=120$.

The full master field is approximated by a Fourier--Galerkin model of order $K_*=64$, which is used as the high-resolution reference solution. For a slave of order $K \le K_*$, we therefore compute the numerical error field
\begin{equation}
\mathcal E_{K_*}^{(K)}(t,x)
:=u_{K_*}(t,x)-v_K(t,x),
\label{eq:numerical-error-field}
\end{equation}
and regard it as an approximation of the infinite-dimensional error $\mathcal E_\infty^{(K)}$ in~\eqref{eq:physical-error-full}. We use $K\in\{16,32,64\}$ for Kawahara and $K\in\{32,48,64\}$ for Nikolaevskiy.

The master coefficients are integrated with the classical fourth-order Runge--Kutta (RK4) method. In the slave, the deterministic drift, including the coupling term, is advanced with the same Runge--Kutta method, while the additive stochastic term is simulated with a Heun predictor--corrector step. The same Wiener increment is used in the predictor and corrector. All experiments run through $2\times10^4$ uniform time steps. Thus $\Delta t=4\times10^{-3}$ for the Kawahara horizon $T=80$, whereas $\Delta t=1.5\times10^{-3}$ for the Nikolaevskiy horizon $T=30$.

%
\subsection{An illustrative simulation}

Figure~\ref{fig:numerical-fields} shows a realization of the master and slave fields with slave order $K=32$. The observations are available on a uniform spatial grid with spacing $3/2$. For the Kawahara model, we take $d=0.25$ and $\sigma^2=0.25$, while for Nikolaevskiy we take $d=4$ and $\sigma^2=0.5$.

\begin{figure}[t]
\centering
\begin{minipage}[t]{0.24\linewidth}
\centering
\includegraphics[width=\linewidth]{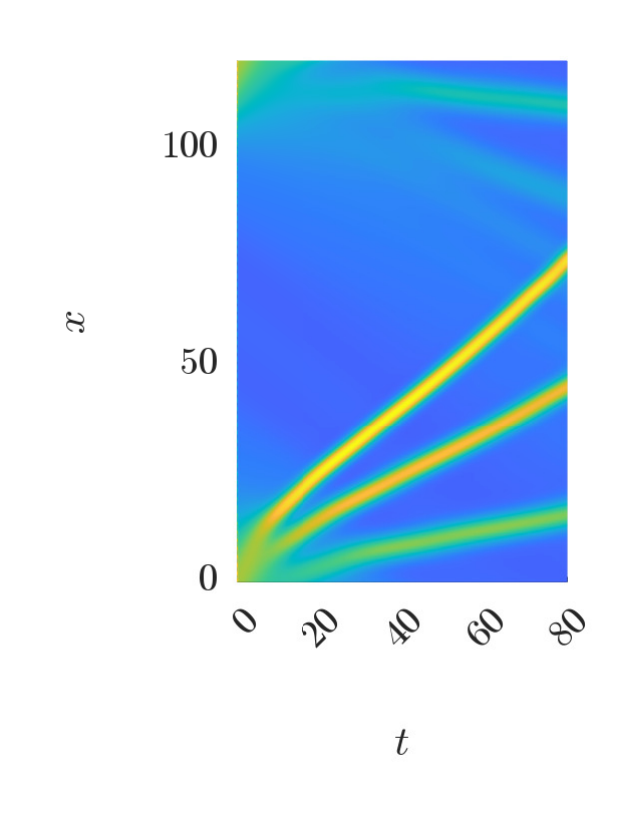}
\par\smallskip (a)
\end{minipage}\hfill
\begin{minipage}[t]{0.24\linewidth}
\centering
\includegraphics[width=\linewidth]{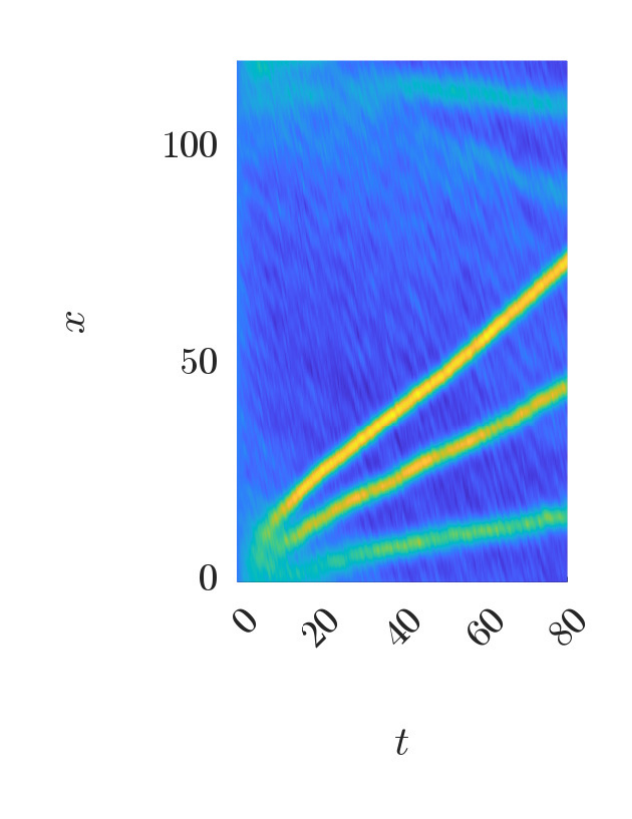}
\par\smallskip (b)
\end{minipage}\hfill
\begin{minipage}[t]{0.24\linewidth}
\centering
\includegraphics[width=\linewidth]{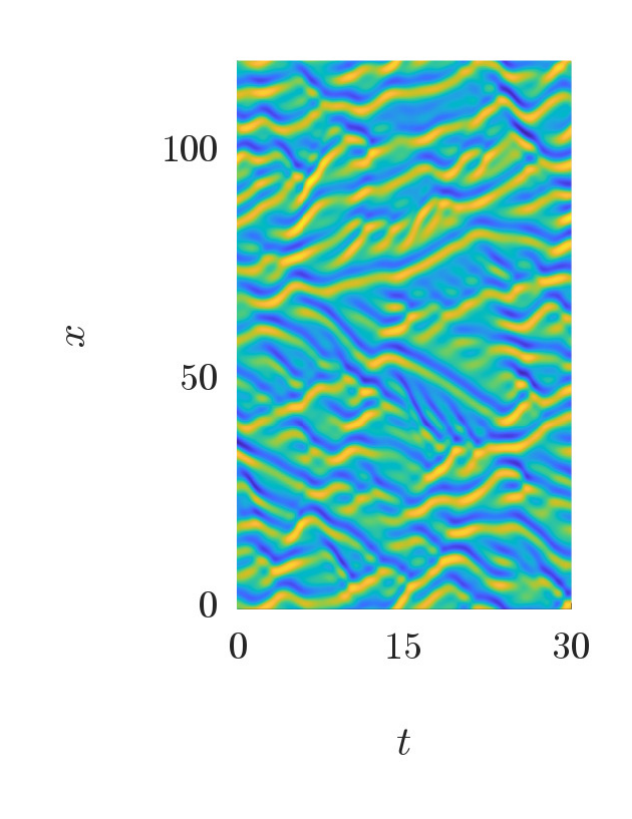}
\par\smallskip (c)
\end{minipage}\hfill
\begin{minipage}[t]{0.24\linewidth}
\centering
\includegraphics[width=\linewidth]{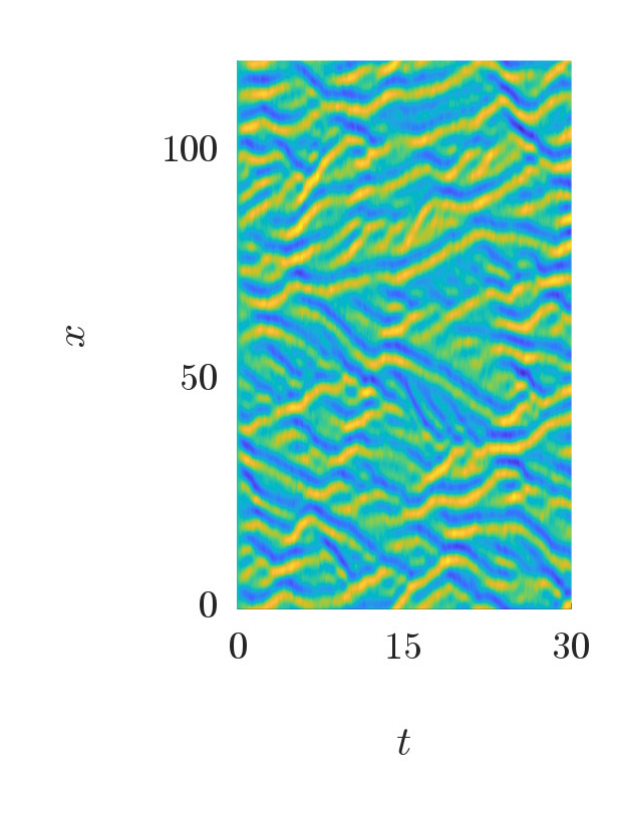}
\par\smallskip (d)
\end{minipage}
\caption{Representative master and slave fields with $K_*=64$ and $K=32$:
(a) Kawahara master, (b) Kawahara slave, (c) Nikolaevskiy master, and
(d) Nikolaevskiy slave. The Kawahara parameters are $d=0.25$ and
$\sigma^2=0.25$; the Nikolaevskiy parameters are $d=4$ and $\sigma^2=0.5$.}
\label{fig:numerical-fields}
\end{figure}

For each realization, we report the normalized mean-square synchronization error
\begin{equation}
\operatorname{NMSE}_K(t)
:=
\frac{\displaystyle\int_0^X
\left|\mathcal E_{K_*}^{(K)}(t,x)\right|^2\,{\rm d}x}
{\displaystyle\int_0^X|u_{K_*}(t,x)|^2\,{\rm d}x}.
\label{eq:numerical-nmse}
\end{equation}
The corresponding NMSE trajectories are shown in Figure~\ref{fig:numerical-nmse}. In both examples, the slave rapidly reproduces the coherent structures of the master. Observation noise remains visible as fine-scale random fluctuations, so exact pathwise synchronization is not attained. Nevertheless, after a short transient, the NMSE remains below $10^{-1}$, i.e., the mean error power is less than ten percent of the master-field power.

\begin{figure}[t]
\centering
\begin{minipage}[t]{0.47\linewidth}
\centering
\includegraphics[width=0.6\linewidth]{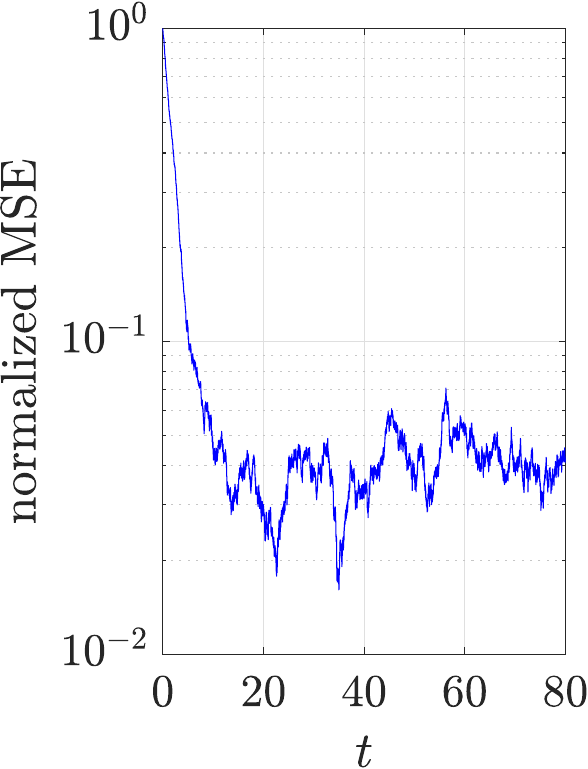}
\par\smallskip (a)
\end{minipage}\hfill
\begin{minipage}[t]{0.47\linewidth}
\centering
\includegraphics[width=0.6\linewidth]{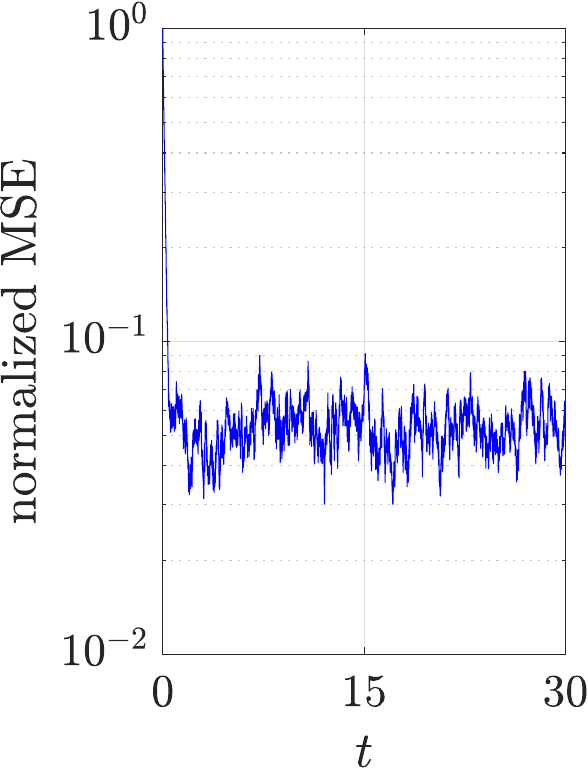}
\par\smallskip (b)
\end{minipage}
\caption{Normalized mean-square synchronization error~\eqref{eq:numerical-nmse}
for (a) the Kawahara realization and (b) the Nikolaevskiy realization shown
in Figure~\ref{fig:numerical-fields}.}
\label{fig:numerical-nmse}
\end{figure}

%
\subsection{Noise scaling and error floor}

We next consider the observation-noise variances
\[
\sigma^2\in\{10^{-3},5\times10^{-3},10^{-2},5\times10^{-2},10^{-1},
5\times10^{-1},1\}.
\]
To retain the same observation locations for every slave order, including $K=64$, we use a uniform spatial grid with spacing $3/4$; the coarser grid used above would make the order-$64$ reconstruction matrix rank deficient. For each value of $\sigma^2$, the error is averaged over $N_R=20$ independent realizations and over the final $4000$ time samples. Equivalently, with $T_0=0.8T$, the plotted statistic approximates
\begin{equation}
\widehat{\mathcal E}_K(\sigma^2)
:=
\frac{1}{N_R}\sum_{r=1}^{N_R}
\frac{1}{X(T-T_0)}
\int_{T_0}^{T}\int_0^X
\left|\mathcal E_{K_*,r}^{(K)}(t,x)\right|^2
\,{\rm d}x\,{\rm d}t.
\label{eq:numerical-time-ensemble-mse}
\end{equation}
The late-time window removes the synchronization transient and isolates the noise-dependent residual and the truncation floor.

Motivated by Proposition~\ref{prop:physical-error-uniform-K}, we fit the affine model
\begin{equation}
\widehat{\mathcal E}_K(\sigma^2)
\approx \alpha_K\sigma^2+\beta_K,
\qquad \beta_K\geq0,
\label{eq:numerical-affine-fit}
\end{equation}
by least squares. Here $\alpha_K\sigma^2$ represents the stochastic contribution and $\beta_K$ estimates the error floor produced by resolving fewer modes in the slave than in the reference master. The resulting coefficients are listed in Table~\ref{tab:numerical-affine-fits}. The data and fits are shown in Figure~\ref{fig:numerical-noise-scaling}.

When $K=K_*=64$, there is no spectral mismatch relative to the numerical master and the constrained fits yield $\beta_{64}=0$. Separate power-law fits for this case give exponents $1.002$ for Kawahara and $1.001$ for Nikolaevskiy, providing direct numerical evidence that the stochastic part of the mean-square error is proportional to $\sigma^2$. For $K<K_*$, the curves flatten as $\sigma^2$ decreases and reveal a positive truncation floor. This floor is very pronounced for $K=16$ in the Kawahara experiment and for $K=32$ in the Nikolaevskiy experiment, but it has already fallen to $4.07\times10^{-5}$ at $K=32$ for Kawahara and to $1.80\times10^{-3}$ at $K=48$ for the Nikolaevskiy model. These experiments therefore reproduce the error decomposition predicted by the analysis: an approximately linear stochastic term in $\sigma^2$, plus a resolution-dependent contribution that vanishes as the slave resolution approaches that of the master. Since $K_*=64$ is itself a finite Fourier approximation, the experiment demonstrates convergence relative to the high-resolution numerical reference rather than proving convergence to the exact infinite-dimensional solution.

\begin{table}[htbp]
\caption{Constrained affine fits in~\eqref{eq:numerical-affine-fit}.}
\label{tab:numerical-affine-fits}
\centering
\begin{tabular}{lccc}
\toprule
Model & $K$ & $\alpha_K$ & $\beta_K$ \\
\midrule
Kawahara & 16 & $2.87\times10^{-2}$ & $2.75\times10^{-2}$ \\
Kawahara & 32 & $5.20\times10^{-2}$ & $4.07\times10^{-5}$ \\
Kawahara & 64 & $5.83\times10^{-2}$ & $0$ \\
Nikolaevskiy & 32 & $9.28\times10^{-1}$ & $4.46\times10^{-1}$ \\
Nikolaevskiy & 48 & $9.28\times10^{-1}$ & $1.80\times10^{-3}$ \\
Nikolaevskiy & 64 & $9.54\times10^{-1}$ & $0$
\\
\bottomrule
\end{tabular}
\end{table}

\begin{figure}[t]
\centering
\begin{minipage}[t]{0.47\linewidth}
\centering
\includegraphics[width=0.6\linewidth]{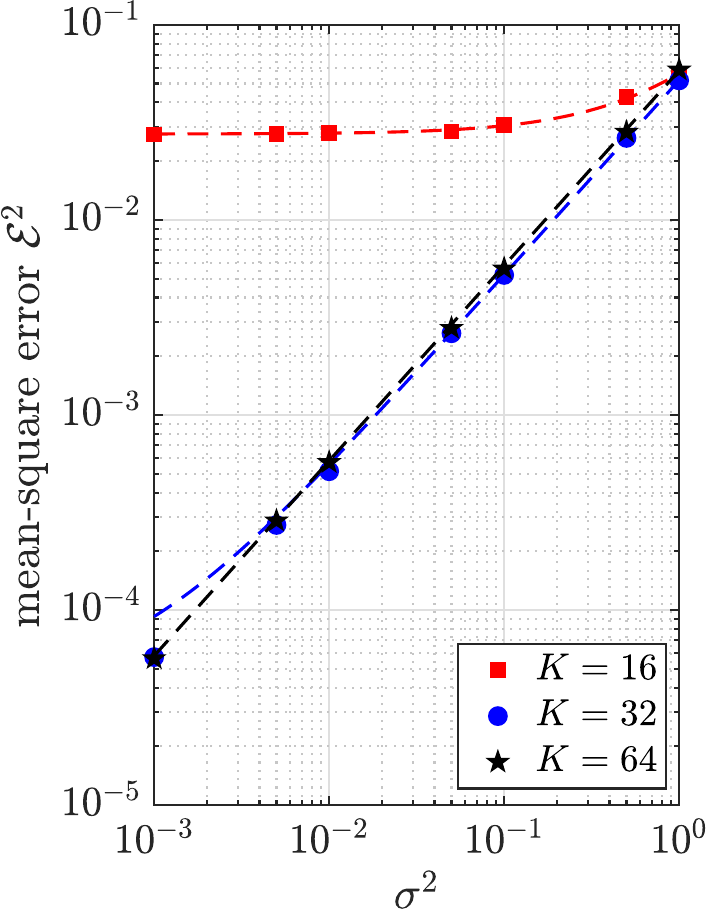}
\par\smallskip (a)
\end{minipage}\hfill
\begin{minipage}[t]{0.47\linewidth}
\centering
\includegraphics[width=0.6\linewidth]{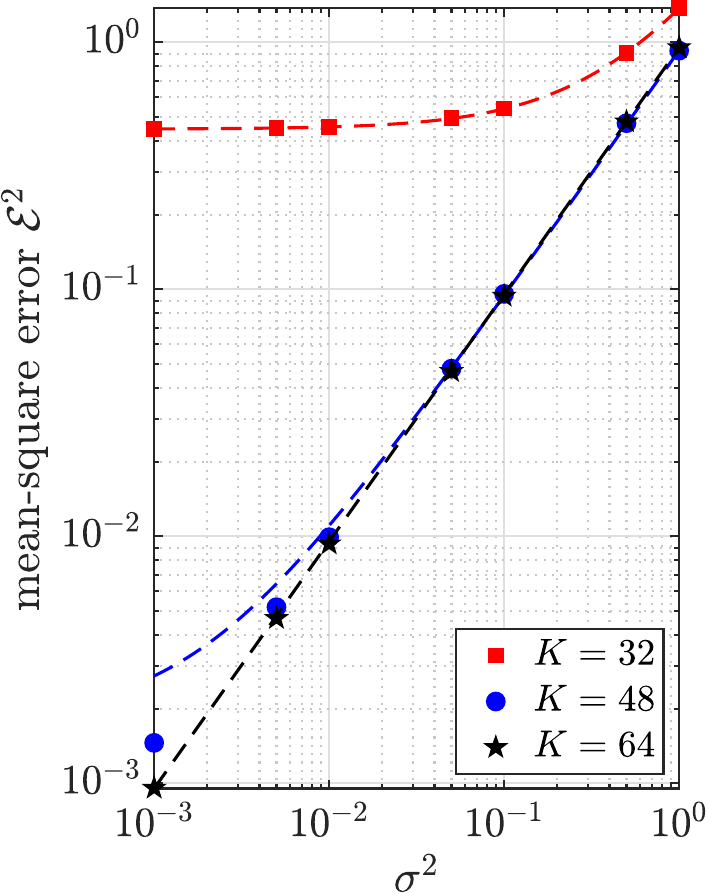}
\par\smallskip (b)
\end{minipage}
\caption{Late-time, ensemble-averaged physical-space MSE versus observation-noise
variance for (a) Kawahara and (b) Nikolaevskiy. Markers show Monte Carlo
estimates and dashed curves show the constrained affine fits
$\alpha_K\sigma^2+\beta_K$.}
\label{fig:numerical-noise-scaling}
\end{figure}

%
\section{Summary and discussion}
\label{sConclusions}

We have investigated the stability of stochastic master--slave synchronization for a broad class of nonlinear dissipative evolution equations represented through a finite-dimensional Fourier truncation. The analysis combines deterministic contraction arguments with stochastic perturbation estimates to quantify the effect of observational noise on the synchronization error.

Our main theoretical results are given by Propositions~\ref{prop:stochastic-closeness} and~\ref{prop:global-stochastic-closeness}. Proposition~\ref{prop:stochastic-closeness} establishes local stochastic closeness between the deterministic master system and the stochastic slave system. Specifically, it shows that, up to the first exit time from a sufficiently small neighbourhood of the synchronized state, a deviation process $\Delta(t)$ remains bounded both in mean square and in probability, with the expected magnitude of the synchronization error scaling as $O(\sigma)$. The deviation process $\Delta(t)=e(t) - e_{\rm ref}(t)$ is the difference between a deterministic reference error $e_{\rm ref}(t)$, which is exponentially stable around zero, and the stochastic synchronization error $e(t)$ that results from noisy observations. Proposition~\ref{prop:global-stochastic-closeness} strengthens the latter result by providing analogous estimates over arbitrary finite time intervals under a one-sided global dissipativity assumption. Moreover, when the regularity assumptions hold uniformly in time, the mean-square synchronization error admits a time-uniform bound independent of the time horizon.

Proposition~\ref{prop:physical-error-uniform-K} extends the finite-dimensional Fourier-Galerkin analysis in order to characterize explicitly the synchronization error w.r.t. the infinite-dimensional master system. This error is the sum of the unresolved Fourier tail, the resolved Fourier--Galerkin error, the deterministic reference synchronization error and the stochastic deviation. Sobolev and Galerkin-stability bounds that are uniform in $K$ on every finite horizon ensure that the first two contributions vanish algebraically as $K\to\infty$ at each fixed time. Under the additional time-uniform Sobolev and dissipativity assumptions, the subsequent long-time limit yields a mean-square residual of order $O(\sigma^2)$.

Several directions for future research may arise from this work. An obvious venue for exploration is the extension to 2- and 3-dimensional spatial models. Another significant extension is the incorporation of adaptive parameter estimation into the stochastic synchronization framework, along the lines of the adaptive scheme proposed in \cite{Miguez2024} for the Kuramoto-Sivashinsky model. Combining simultaneous state synchronization and parameter learning with the stochastic analysis presented here would provide a mathematically grounded framework for data assimilation in partially-known nonlinear dissipative systems. 


\appendix

%
\section{Proof of Proposition~\ref{prop:local-sync-general}}
\label{app:proof-prop1}

Under Assumption~\ref{ass:exact-reconstruction}, the retained master coefficients $\bar a_k(t)$ are exactly available to the slave. The master and slave vector fields are given by \eqref{eq:master-vector-general} and \eqref{eq:slave-vector-general} and subtracting them yields the error equation \eqref{eq:error-ode-general}.

The map \(\eta\) in \eqref{eq:eta-general} is quadratic. Therefore the map \(e\mapsto\eta(\bar a(t))-\eta(\bar a(t)-e)\) is continuously differentiable, and its Jacobian at \(e=0\) is the matrix
\begin{equation}
\label{eq:Q-def}
Q(\bar a(t))
:=
\left.\partial_e\left[\eta(\bar a(t))-\eta(\bar a(t)-e)\right]\right|_{e=0},
\end{equation}
whose entries are linear combinations of the retained coefficients \(\bar a_j(t)\). Since \(u_K\) is a finite Fourier series, Assumption~\ref{ass:bounded-master} implies that the coefficients \(\bar a_j(t)\), and hence \(Q(\bar a(t))\), are uniformly bounded on \([0,T]\).

Using \eqref{eq:taylor-general}--\eqref{eq:linearization-general}, the Hermitian part of the linearization is
\begin{equation}
\label{eq:hermitian-linearization-general}
H(t):=\frac12\left(A(t)+A(t)^H\right).
\end{equation}
Since \(A(t)=M(t)-dI\), with
\[
M(t)=\Lambda(P,\omega_0)-\frac{ic_1\omega_0}{2}Q(\bar a(t)),
\]
and the Hermitian part of \(M(t)\) is uniformly bounded on \([0,T]\), there exist \(d_0<\infty\) and \(\alpha>0\) such that, for every \(d>d_0\),
\begin{equation}
\label{eq:uniform-dissipativity}
H(t)\preceq -\alpha I,
\qquad 0\le t\le T.
\end{equation}
Equivalently,
\[
\operatorname{Re}\langle e,A(t)e\rangle\le -\alpha\|e\|^2,
\qquad 0\le t\le T.
\]
Here, for vectors \(x,y\in\mathbb C^{K+1}\), $\langle x,y\rangle=x^H y$ is the standard complex Euclidean inner product and
\[
\operatorname{Re}\langle x,y\rangle
=
\operatorname{Re}(x^H y)
=
\frac12\left(x^H y+y^H x\right)
\]
denotes its real part. In particular, \(\operatorname{Re}\langle e ,A(t)e\rangle=e^H H(t)e\), where \(H(t)=(A(t)+A(t)^H)/2 \preceq -\alpha I\) by \eqref{eq:uniform-dissipativity}.

Let \(V(e)=\frac12\|e\|^2\). Along solutions of \eqref{eq:error-ode-general},
\begin{equation}
\dot V(e)
=\operatorname{Re}\langle e,\dot e\rangle
=\operatorname{Re}\langle e,A(t)e\rangle+
\operatorname{Re}\langle e,r(t,e)\rangle 
\le -\alpha\|e\|^2+C_{r,T}\|e\|^3 .
\label{eq:Vdot-general}
\end{equation}
Choose \(\rho>0\) such that \(C_{r,T}\rho\le \alpha/2\). Whenever \(\|e\|\le\rho\),
\[
\dot V(e)\le -\frac{\alpha}{2}\|e\|^2=-\alpha V(e).
\]
Hence the zero solution \(e(t)\equiv0\) is locally exponentially stable on \([0,T]\). Since \(\mathcal E(t,x)\equiv0\) if and only if \(e(t)=0\), the zero synchronization-error field is locally exponentially stable as well.

%
\section{Proof of Lemma \ref{lem:ref-dissipativity}}
\label{ap:ref-dissipativity}

From \eqref{eq:jacobian-ref-general} and Assumption \ref{ass:scalar-coupling},
\[
J_{\mathrm{ref}}(t):=\partial_e {\sf f}(t, e_{\rm ref}(t)) = \Lambda(P,\omega_0)-dI-\frac{i c_1\omega_0}{2}Q\bigl(\bar a(t)-e_{\mathrm{ref}}(t)\bigr).
\]
Since \(\eta\) is quadratic, the entries of \(Q(\cdot)\) depend linearly on their argument. Using \eqref{eq:master-coeff-bound-V} (which follows from Assumption \ref{ass:bounded-master}) and Assumption~\ref{ass:reference-error-path}, we obtain the bound
\[
\sup_{0\le t\le T}\|\bar a(t)-e_{\mathrm{ref}}(t)\|
\le C_{a,T}+C_{\mathrm{ref},T}<\infty,
\]
hence there exists a finite constant \(C_{Q,T} \propto C_{a,T}+C_{\mathrm{ref},T}\) such that
\[
\sup_{0\le t\le T}\bigl\|Q\bigl(\bar a(t)-e_{\mathrm{ref}}(t)\bigr)\bigr\|\le C_{Q,T} < \infty.
\]
Therefore the Hermitian part of
\[
M_{\mathrm{ref}}(t):=J_{ref}(t) + dI = \Lambda(P,\omega_0)-\frac{i c_1\omega_0}{2}Q\bigl(\bar a(t)-e_{\mathrm{ref}}(t)\bigr)
\]
is uniformly bounded on \([0,T]\), which implies that its maximum eigenvalue is finite, namely 
\[
m_*:=\sup_{0\le t\le T}\lambda_{\max}\!\left(
\frac12\bigl(M_{\mathrm{ref}}(t)+M_{\mathrm{ref}}(t)^H\bigr)\right)<\infty.
\]
Choose any \(d>m_*\), and define \(\alpha:=d-m_*>0\). Then
\[
\frac12\Bigl(J_{\mathrm{ref}}(t)+J_{\mathrm{ref}}^H(t)\Bigr)
=
\frac12\bigl(M_{\mathrm{ref}}(t)+M_{\mathrm{ref}}^H(t)\bigr)-dI
\preceq -\alpha I
\]
for all \(t\in[0,T]\), which proves the claim.

%
\section{Derivation of \texorpdfstring{equation~\eqref{eq-ddrift-phys}}{the physical-space drift equation}}
\label{ap-eq-ddrift-phys}

Recall that, when \(D=dI\), the drift of the stochastic error equation is
\[
{\sf f}(t,e)=
(\Lambda(P,\omega_0)-dI)e
-\frac{i c_1\omega_0}{2}
\Bigl(\eta(\bar a(t))-\eta(\bar a(t)-e)\Bigr).
\]
Fix \(t\in[0,T]\) and \(\Delta\in\mathbb C^{K+1}\) and let
$
G(t,\Delta):={\sf f}(t,e_{\mathrm{ref}}(t)+\Delta)-{\sf f}(t,e_{\mathrm{ref}}(t)).
$
By subtracting the two drift values, the terms \(\eta(\bar a(t))\) cancel and we obtain
\begin{equation}
G(t,\Delta) =(\Lambda(P,\omega_0)-dI)\Delta-\frac{i c_1\omega_0}{2}
\Bigl(
\eta(\bar a(t)-e_{\mathrm{ref}}(t))
-
\eta(\bar a(t)-e_{\mathrm{ref}}(t)-\Delta)
\Bigr).
\label{eq:appendix-G-drift-difference}
\end{equation}
The linear part is immediate from the definition of the Fourier multiplier; we obtain
\[
\mathcal F^{-1}\bigl[\Lambda(P,\omega_0)\Delta\bigr]
=-P(\partial_x)h_\Delta,
\quad \text{and} \quad
\mathcal F^{-1}[-d\Delta]=-d h_\Delta.
\]

It remains to identify the nonlinear part. For any coefficient vector \(z_{-K:K}\), let \(w=\mathcal F^{-1}z\). From the definition of the convolution map \(\eta\),
\[
\eta_k(z)=k\sum_{\ell=-K}^{K}z_\ell z_{k-\ell},
\]
with the usual convention that $z_k=0$ for $|k|>K$. Since the Fourier coefficient of \(w^2\) at index \(k\) is \(\sum_\ell z_\ell z_{k-\ell}\), it follows that
\begin{equation}
\label{eq:appendix-nonlinear-fourier-identity}
\mathcal F^{-1}\left[\frac{i\omega_0}{2}\eta(z)\right]
=
\frac12\partial_x(w^2)=w\partial_x w.
\end{equation}
The physical field associated with \(\bar a(t)-e_{\mathrm{ref}}(t)\) is
\[
v_{\mathrm{ref}}(t,x)=\mathcal F^{-1}\bigl(\bar a(t)-e_{\mathrm{ref}}(t)\bigr)(x),
\]
while the physical field associated with \(\bar a(t)-e_{\mathrm{ref}}(t)-\Delta\) is
\[
v_{\mathrm{ref}}(t,x)-h_\Delta(t,x).
\]
Applying \eqref{eq:appendix-nonlinear-fourier-identity} to the nonlinear term in \eqref{eq:appendix-G-drift-difference} yields
\begin{equation*}
\mathcal F^{-1}\left[
-\frac{i c_1\omega_0}{2}
\Bigl(
\eta(\bar a-e_{\mathrm{ref}})
-
\eta(\bar a-e_{\mathrm{ref}}-\Delta)
\Bigr)
\right] =
-c_1\Bigl[
 v_{\mathrm{ref}}\partial_x v_{\mathrm{ref}}
-
 (v_{\mathrm{ref}}-h_\Delta)\partial_x(v_{\mathrm{ref}}-h_\Delta)
\Bigr].
\end{equation*}
Expanding the second product above we have
\begin{equation}
(v_{\mathrm{ref}}-h_\Delta)\partial_x(v_{\mathrm{ref}}-h_\Delta)
=(v_{\mathrm{ref}}-h_\Delta)(\partial_xv_{\mathrm{ref}}-\partial_xh_\Delta)
=v_{\mathrm{ref}}\partial_xv_{\mathrm{ref}}
-v_{\mathrm{ref}}\partial_xh_\Delta
-h_\Delta\partial_xv_{\mathrm{ref}}
+h_\Delta\partial_xh_\Delta
\nonumber
\end{equation}
and, therefore,
\[
 v_{\mathrm{ref}}\partial_x v_{\mathrm{ref}}
-
 (v_{\mathrm{ref}}-h_\Delta)\partial_x(v_{\mathrm{ref}}-h_\Delta)
=
v_{\mathrm{ref}}\partial_xh_\Delta
+h_\Delta\partial_xv_{\mathrm{ref}}
-h_\Delta\partial_xh_\Delta.
\]
Combining this nonlinear identity with the linear contribution yields Eq. \eqref{eq-ddrift-phys}.

%
\section{Proof of Lemma~\ref{lem:uniform-Galerkin-error}}
\label{ap:Galerkin-error-Hs}

We prove Lemma~\ref{lem:uniform-Galerkin-error} by an energy argument. Let
\[
V_K:=\operatorname{span}\{ {\rm e}^{ik\omega_0x}: |k|\le K\}
\]
and let ${\rm P}_K:L^2(0,X)\to V_K$ be the orthogonal Fourier
projection. Set
\[
p_K(t,\cdot):={\rm P}_Ku(t,\cdot),
\qquad
\zeta_K(t,\cdot):=u(t,\cdot)-p_K(t,\cdot),
\qquad
\gamma_K(t,\cdot):=p_K(t,\cdot)-u_K(t,\cdot).
\]
Thus $u-u_K=\zeta_K+\gamma_K$. The field $\zeta_K$ contains the
unresolved modes, while $\gamma_K\in V_K$ measures the difference between
the projection of the full solution and the solution of the closed
Fourier--Galerkin system.

Because ${\rm P}_K$ is independent of time and commutes with the
constant-coefficient operator $P(\partial_x)$, applying ${\rm P}_K$ to the
full PDE gives
\begin{equation}
\partial_t p_K+P(\partial_x)p_K
+c_1{\rm P}_K(u\partial_xu)=0.
\label{eq:appendix-projected-full-PDE}
\end{equation}
The order-$K$ Fourier--Galerkin approximation satisfies
\begin{equation}
\partial_t u_K+P(\partial_x)u_K
+c_1{\rm P}_K(u_K\partial_xu_K)=0.
\label{eq:appendix-Galerkin-PDE}
\end{equation}
Subtracting \eqref{eq:appendix-Galerkin-PDE} from
\eqref{eq:appendix-projected-full-PDE} yields the Galerkin-error equation
\begin{equation}
\partial_t\gamma_K+P(\partial_x)\gamma_K
+c_1{\rm P}_K\bigl(u\partial_xu-u_K\partial_xu_K\bigr)=0.
\label{eq:appendix-Galerkin-error-equation}
\end{equation}

We next take the real $L^2(0,X)$ inner product of
\eqref{eq:appendix-Galerkin-error-equation} with $\gamma_K$. Since
${\rm P}_K$ is self-adjoint and ${\rm P}_K\gamma_K=\gamma_K$,
\[
\langle\gamma_K,{\rm P}_Kf\rangle_{L^2(0,X)}
=\langle{\rm P}_K\gamma_K,f\rangle_{L^2(0,X)}
=\langle\gamma_K,f\rangle_{L^2(0,X)}.
\]
The projector can therefore be removed from the nonlinear term in the
energy identity, and we obtain
\begin{equation}
\frac12\frac{{\rm d}}{{\rm d}t}
\|\gamma_K(t,\cdot)\|_{L^2(0,X)}^2
=-\operatorname{Re}\langle\gamma_K,
P(\partial_x)\gamma_K\rangle_{L^2(0,X)} -c_1\int_0^X\gamma_K
\bigl(u\partial_xu-u_K\partial_xu_K\bigr)\,{\rm d}x.
\label{eq:appendix-Galerkin-energy-identity}
\end{equation}

The linear term is controlled directly by its Fourier symbol. If
$\widehat\gamma_k(t)$ denotes the $k$th Fourier coefficient of $\gamma_K$,
then Parseval's identity and \eqref{eq:qstar-bounded} give
\begin{equation}
-\operatorname{Re}\langle\gamma_K,
P(\partial_x)\gamma_K\rangle_{L^2(0,X)}
=X\sum_{|k|\le K}
\operatorname{Re}\bigl[-P(i\omega_0k)\bigr]
|\widehat\gamma_k(t)|^2
\le\mathfrak q_*
\|\gamma_K(t,\cdot)\|_{L^2(0,X)}^2.
\label{eq:appendix-Galerkin-linear-bound}
\end{equation}

To estimate the nonlinear term, use $u-u_K=\zeta_K+\gamma_K$ to write
\begin{align}
u\partial_xu-u_K\partial_xu_K
&=(u-u_K)\partial_xu
+u_K\partial_x(u-u_K)
\nonumber\\
&=\zeta_K\partial_xu+\gamma_K\partial_xu
+u_K\partial_x\zeta_K+u_K\partial_x\gamma_K.
\label{eq:appendix-Galerkin-nonlinear-decomposition}
\end{align}
Each of the four resulting integrals can be bounded separately. First,
Hölder's inequality gives
\begin{align*}
\left|\int_0^X\gamma_K\zeta_K\partial_xu\,{\rm d}x\right|
&\le \|\partial_xu(t,\cdot)\|_{L^\infty(0,X)}
\|\gamma_K(t,\cdot)\|_{L^2(0,X)}
\|\zeta_K(t,\cdot)\|_{L^2(0,X)},\\
\left|\int_0^X\gamma_K^2\partial_xu\,{\rm d}x\right|
&\le \|\partial_xu(t,\cdot)\|_{L^\infty(0,X)}
\|\gamma_K(t,\cdot)\|_{L^2(0,X)}^2 \quad\text{and}\\
\left|\int_0^X\gamma_Ku_K\partial_x\zeta_K\,{\rm d}x\right|
&\le \|u_K(t,\cdot)\|_{L^\infty(0,X)}
\|\gamma_K(t,\cdot)\|_{L^2(0,X)}
\|\partial_x\zeta_K(t,\cdot)\|_{L^2(0,X)}.
\end{align*}
For the last integral, periodicity and integration by parts yield
\begin{equation}
\int_0^X\gamma_Ku_K\partial_x\gamma_K\,{\rm d}x
=\frac12\int_0^Xu_K\partial_x(\gamma_K^2)\,{\rm d}x
=-\frac12\int_0^X(\partial_xu_K)\gamma_K^2\,{\rm d}x
\nonumber
\end{equation}
and hence
\[
\left|\int_0^X\gamma_Ku_K\partial_x\gamma_K\,{\rm d}x\right|
\le\frac12\|\partial_xu_K(t,\cdot)\|_{L^\infty(0,X)}
\|\gamma_K(t,\cdot)\|_{L^2(0,X)}^2.
\]

For $s>3/2$, the one-dimensional Sobolev embedding
$H^s_{\rm per}(0,X)\hookrightarrow W^{1,\infty}(0,X)$ and the uniform
bounds in Assumption~\ref{ass:uniform-physical-error-regularity} imply that the
$L^\infty(0,X)$ norms appearing above are bounded independently of $K$ and
$t\in[0,T]$. More precisely, if $C_{\rm emb}=C_{\rm emb}(s,X)$ denotes the
norm of this embedding and
\[
U_{s,T}:=C_{\rm emb}M_{s,T},
\]
then
\[
\|\partial_xu(t,\cdot)\|_{L^\infty(0,X)},\quad
\|u_K(t,\cdot)\|_{L^\infty(0,X)} \quad\text{and}\quad
\|\partial_xu_K(t,\cdot)\|_{L^\infty(0,X)}
\le U_{s,T}.
\]
Combining the four estimates with
\eqref{eq:appendix-Galerkin-linear-bound}, and setting
\[
A_{s,T}:=\max\{\mathfrak q_*,0\}
+\frac{3}{2}|c_1|U_{s,T},
\qquad
B_{s,T}:=|c_1|U_{s,T},
\]
we obtain
\begin{align}
\frac12\frac{{\rm d}}{{\rm d}t}
\|\gamma_K(t,\cdot)\|_{L^2(0,X)}^2
&\le A_{s,T}\|\gamma_K(t,\cdot)\|_{L^2(0,X)}^2
\nonumber\\
&\quad+B_{s,T}\|\gamma_K(t,\cdot)\|_{L^2(0,X)}
\Bigl(
	\|\zeta_K(t,\cdot)\|_{L^2(0,X)} 
	+ \|\partial_x\zeta_K(t,\cdot)\|_{L^2(0,X)}
\Bigr).
\label{eq:appendix-Galerkin-pre-Young}
\end{align}

Young's inequality, in the form $ab\le(a^2+b^2)/2$, absorbs the factor
$\|\gamma_K(t,\cdot)\|_{L^2(0,X)}$ into the first term. Multiplying the
resulting inequality by $2$ gives
\begin{equation}
\frac{{\rm d}}{{\rm d}t}
\|\gamma_K(t,\cdot)\|_{L^2(0,X)}^2
\le C_{1,s,T}\|\gamma_K(t,\cdot)\|_{L^2(0,X)}^2
+C_{2,s,T}
\left(
\|\zeta_K(t,\cdot)\|_{L^2(0,X)}^2
+\|\partial_x\zeta_K(t,\cdot)\|_{L^2(0,X)}^2
\right),
\label{eq:appendix-Galerkin-post-Young}
\end{equation}
where
\[
C_{1,s,T}:=2A_{s,T}+1
\quad\text{and}\quad
C_{2,s,T}:=2B_{s,T}^2,
\]
and both constants are independent of $K$. By
\eqref{eq:projection-tail-Hs}, the uniform bound in
Assumption~\ref{ass:uniform-physical-error-regularity}, and
$\kappa_K\ge1$,
\begin{equation}
\|\zeta_K(t,\cdot)\|_{L^2(0,X)}^2
+\|\partial_x\zeta_K(t,\cdot)\|_{L^2(0,X)}^2
\le M_{s,T}^2 \left(
	\kappa_K^{-2s}+\kappa_K^{-2(s-1)}
\right)
\le 2M_{s,T}^2\kappa_K^{-2(s-1)}.
\end{equation}
Consequently, with
\[
C_{3,s,T}:=2C_{2,s,T}M_{s,T}^2,
\]
which is also independent of $K$,
\begin{equation}
\frac{{\rm d}}{{\rm d}t}
\|\gamma_K(t,\cdot)\|_{L^2(0,X)}^2
\le C_{1,s,T}\|\gamma_K(t,\cdot)\|_{L^2(0,X)}^2
+C_{3,s,T}\kappa_K^{-2(s-1)}.
\label{eq:appendix-Galerkin-Gronwall-input}
\end{equation}

Finally, the projected initialization
$u_K(0,\cdot)={\rm P}_Ku(0,\cdot)$ implies
$\gamma_K(0,\cdot)=0$. The integral form of Gronwall's inequality applied
to \eqref{eq:appendix-Galerkin-Gronwall-input} gives, for $0\le t\le T$,
\begin{align*}
\|\gamma_K(t,\cdot)\|_{L^2(0,X)}^2
&\le C_{3,s,T}\kappa_K^{-2(s-1)}
\int_0^t {\rm e}^{C_{1,s,T}(t-r)}\,{\rm d}r\\
&\le C_{3,s,T}T{\rm e}^{C_{1,s,T}T}
\kappa_K^{-2(s-1)}.
\end{align*}
Thus \eqref{eq:Galerkin-error-Hs} holds with, for example,
\[
G_{s,T}:=C_{3,s,T}T{\rm e}^{C_{1,s,T}T}.
\]
The constants entering $G_{s,T}$ depend on $s$, $T$, $X$, the PDE
coefficients, and the uniform Sobolev bound $M_{s,T}$, but not on the
truncation order $K$.

\bibliographystyle{siamplain}
\bibliography{references_v7}

\end{document}